\definecolor{mypurple}{rgb}{.4,.0,.5}
\def\y{{\bf y}}
\def\x{{\bf x}}
\def\x{{\mathbf x}}
\def\u{{\bf u}}
\def\x{{\bf x}}
\def\y{{\bf y}}
\def\z{{\bf z}}
\def\q{{\bf q}}
\def\m{{\bf m}}
\def\c{{\bf c}}
\def\h{{\bf h}}
\def\tr{\mbox{Tr}}
\def\tr{{\rm tr}\,}
\def\cH{{\mathcal H}}
\def\be{\begin{equation}}
\def\ee{\end{equation}}
\def\ba{\left[\begin{array}}
\def\ea{\end{array}\right]}
\def\u{{\bf u}}
\def\x{{\bf x}}
\def\y{{\bf y}}
\def\z{{\bf z}}
\def\q{{\bf q}}
\def\c{{\bf c}}
\def\p{{\bf p}}
\def\1{{\bf 1}}
\def\0{{\bf 0}}
\def\calX{{\cal X}}
\def\calY{{\cal Y}}
\def\mR{{\mathbb R}}
\def\mN{{\mathbb N}}
\def\mE{{\mathbb E}}
\def\mS{{\mathbb S}}
\def\lp{\left (}
\def\rp{\right )}
\def\y{{\bf y}}
\def\x{{\bf x}}
\def\x{{\mathbf x}}
\def\u{{\bf u}}
\def\x{{\bf x}}
\def\y{{\bf y}}
\def\z{{\bf z}}
\def\q{{\bf q}}
\def\c{{\bf c}}
\def\h{{\bf h}}
\def\tr{\mbox{Tr}}
\def\tr{{\rm tr}\,}
\def\cH{{\cal H}}
\def\be{\begin{equation}}
\def\ee{\end{equation}}
\def\ba{\left[\begin{array}}
\def\ea{\end{array}\right]}
\def\u{{\bf u}}
\def\x{{\bf x}}
\def\y{{\bf y}}
\def\z{{\bf z}}
\def\q{{\bf q}}
\def\c{{\bf c}}
\def\p{{\bf p}}
\def\({\left (}
\def\){\right )}
\def\1{{\bf 1}}
\def\m{{\bf m}}
\def\q{{\bf q}}
\def\0{{\bf 0}}
\def\cX{{\mathcal X}}
\def\cY{{\mathcal Y}}
\def\rx{\x^{(r)}}
\def\rcX{{\mathcal X}^{(r)}}
\def\bq{\bar{q}}
\newtheorem{theorem}{Theorem}
\newtheorem{corollary}{Corollary}
\newtheorem{observation}{Observation}
\begin{document}

\begin{singlespace}

\title {Generic proving of replica symmetry breaking 
}
\author{
\textsc{Mihailo Stojnic
\footnote{e-mail: {\tt flatoyer@gmail.com}} }}
\date{}
\maketitle

\centerline{{\bf Abstract}} \vspace*{0.1in}

We study the replica symmetry breaking (rsb) concepts on a generic level through the prism of recently introduced interpolating/comparison mechanisms for bilinearly indexed (bli) random processes. In particular, \cite{Stojnicnflgscompyx23} introduced a \emph{fully lifted} (fl) interpolating mechanism  and \cite{Stojnicsflgscompyx23} developed its a \emph{stationarized fully lifted} (sfl) variant. Here, we present a sfl \emph{matching} mechanism that shows that the results obtained in \cite{Stojnicnflgscompyx23,Stojnicsflgscompyx23} completely correspond to the ones obtained by a statistical physics replica tool with the replica symmetry breaking (rsb) form suggested by Parisi in \cite{Par79,Parisi80,Par80}. The results are very generic as they allow to handle pretty much all bilinear models at once. Moreover, given that the results of \cite{Stojnicnflgscompyx23,Stojnicsflgscompyx23} are extendable to many other forms, the concepts presented here automatically extend to any such forms as well.

\vspace*{0.25in} \noindent {\bf Index Terms: Random processes; Replica symmetry breaking; Interpolation; Stationarization}.

\end{singlespace}

\section{Introduction}
\label{sec:back}

Studying random mathematical structures has been the subject of extensive interest over the last several decades in a host of different scientific fields. These range from highly theoretical to very practical ones and include various disciplines of theoretical and applied mathematics and probability (see, e.g., \cite{Talbook11a,Talbook11b,Pan10,Pan10a,Pan13,Pan13a,Pan14,Guerra03,Tal06,StojnicRegRndDlt10,Stojnicgscompyx16,Stojnicgscomp16,SchTir03}), statistical physics (see, e.g., \cite{Par79,Par80,Par83,Parisi80,SheKir72,Little74,Gar88,GarDer88,KraMez89}), signal processing, information theory and algorithms (see, e.g., \cite{BayMon10,BayMon10lasso,StojnicCSetam09,StojnicGenLasso10}), neural and biological networks (see, e.g., \cite{AgiAstBarBurUgu12,AgiBarBarGalGueMoa12,AgiBarGalGueMoa12,Hebb49,Gar88,GarDer88,KraMez89}, theoretical computer science
(see, e.g., \cite{Moll12,CojOgh14,DinSlySun15,MezParZec02,MerMezZec06,FriWor05,AchPer04,AlmSor02,Wast12,LinWa04,Talbook11a,Talbook11b,NaiPraSha05,CopSor02,Ald01}) and many others. Many reasons are typically behind studying such structures. On a most general level, they can be split into two main categories: 1) randomness naturally arises in the phenomena of interest; and 2) randomness is usually imposed either as an attempt to resemble/emulate the true behavior of the studied objects or as a helping/facilitating tool that enables mathematical studying and analytical characterization of otherwise non-random objects. For problems motivated by the reasons from any of the two categories, many excellent results have been obtained over the last several decades while critically relying on a fruitful connection between approaches based on statistical physics and corresponding fully rigorous mathematics. The most prominent such connections include full settling of a host of well known problems related to the so-called Sherrington-Kirkpatrick (SK) models (see, e.g., \cite{Talbook11a,Talbook11b,Pan10,Pan10a,Pan13,Pan13a,Pan14,Guerra03,Tal06}), spherical perceptrons (see, e.g., \cite{SchTir03,StojnicGardGen13,Tal05,Talbook11a,Talbook11b,Wendel,Winder,Winder61,Cover65,DonTan09Univ}), compressed sensing/statistical regression (see, e.g., \cite{StojnicUpper10,StojnicCSetam09,StojnicGenLasso10,BayMon10,BayMon10lasso,DonohoPol}), satisfiability/assignment problems
(see, e.g., \cite{CojOgh14,DinSlySun14,DinSlySun15,DinSlySun14,MezParZec02,MerMezZec06,FriWor05,AchPer04}),
assignment/matching problems
(see, e.g., \cite{AlmSor02,Wast04,Wast09,Wast12,LinWa04,TalAssignment03,Talbook11a,Talbook11b,NaiPraSha05,CopSor02,CopSor99,Ald01,Ald92,MezPar87})  and so on.

A particular statistical physics tool, called \emph{replica theory}, is an integral part of studying of almost all of the above problems. Within the replica theory, one usually distinguishes two types of prevalent behavior: 1) the standard \emph{replica symmetric} one; and 2) its less conventional \emph{replica-symmetry breaking} (rsb) alternative. Also, among the rsb ones, two groups are most prevalent: i) the 1-rsb; and ii) full $r$-rsb with $r\rightarrow\infty$. While finite $r$, $r$-rsb scenarios are possible, they fairly rarely happen in the known problems. With the exception of the SK results \cite{Talbook11a,Talbook11b,Pan10,Pan10a,Pan13,Pan13a,Pan14,Guerra03,Tal06}), pretty much all of the above mentioned known mathematically rigorous results are either related to the replica symmetric scenarios ($0$-rsb) or to the $1$-rsb. This fact on its own already hints at two possible reasons for such rare occurrence of the full $r$-rsb scenarios: 1) the underlying problems are mostly of such nature that $0$- or $1$-rsb are sufficient to handle them; and/or 2) the underlying problems do require full $r$-rsb but it is difficult to rigorously prove such a necessity. Both of these reasons are likely correct. It is not inconceivable that a majority of the interesting problems became interesting due to some form of simplicity which might very well manifest itself in such internal structures that are indeed simple enough to be handled by $0$- or $1$-rsb. On the other hand, those that so far presented themselves as less interesting or hard are possibly of such internal structure that require higher orders or, even more likely, the full $r$-rsb.

As the need for making advancement in studying rsb is rather apparent, we here focus on those scenarios. Various analytical tools have been developed to mathematically rigorously attack these problems over the last several decades. Of our particular interest in this paper is the connection between the rsb and the comparisons of random processes. As is well known, studying random processes played a very strong (quite likely the key) role in establishing fully rigorous analytical characterizations of the SK models (see, e.g., \cite{Talbook11a,Talbook11b,Pan10,Pan10a,Pan13,Pan13a,Pan14,Guerra03,Tal06}). Their role in studying \emph{phase-transition} (PT) phenomena in compressed sensing (see, e.g., \cite{StojnicISIT2010binary,StojnicCSetam09,StojnicUpper10,StojnicICASSP10block,StojnicICASSP10var,StojnicICASSP10knownsupp,StojnicICASSP10knownsupp,StojnicLiftStrSec13,StojnicRicBnds13}), neural networks (see, e.g., \cite{StojnicGardSphNeg13,StojnicGardSphErr13,StojnicDiscPercp13}), and statistical mechanics problems (see, e.g., \cite{StojnicMoreSophHopBnds10,StojnicAsymmLittBnds11}) is rather apparent as well.


Studying random processes is, of course, topic on its own  (more on the development, importance, and relevant prior and contemporary considerations can be found in, e.g.,  \cite{Sudakov71,Fernique74,Fernique75,Kahane86,Stojnicgscomp16,Adler90,Lifshits85,LedTal91,Tal05,Gordon85,Slep62}). Many of their great properties have been uncovered ensuring that circumventing them in the analysis of random structures and optimization problems remains almost unimaginable. Despite a significant progress over the last a couple of decades, characterizations of many random structures and various PT phenomena remain out of reach (e.g., \cite{StojnicLiftStrSec13,StojnicMoreSophHopBnds10,StojnicRicBnds13,StojnicAsymmLittBnds11,StojnicGardSphNeg13,StojnicGardSphErr13} and references therein discuss various forms, including both quadratic and bilinear max and minmax ones, where satisfactory PT characterizations are still lacking). As noted in \cite{Stojnicgscompyx16}, any further progress on this front seems to be fully correlated to the progress in studying and understanding the underlying random processes. The mechanisms  of \cite{Stojnicgscomp16,Stojnicgscompyx16}  introduced the so-called \emph{partially lifting strategy} and made a strong step forward in that direction. More recently, \cite{Stojnicnflgscompyx23,Stojnicsflgscompyx23}  moved things to another level and introduced  a generic statistical interpolating/comparison mechanism called \emph{fully lifted} (fl) and its \emph{stationarization} along the interpolating path realization called \emph{stationarized fully lifted} (sfl). Here we make use of these concepts and study how they behave when compared to the rsb predictions. For concreteness, we choose bilinearly indexed random processes and corresponding statistical mechanics models based on the so-called bilinear Hamiltonians. We first present a collection of full rsb results that can be obtained for such models relying on the so-called \emph{Parisi rsb ansatz} introduced in a sequence of breakthrough papers \cite{Par79,Par80,Par83,Parisi80}. We then proceed by showing that the concepts of \cite{Stojnicnflgscompyx23,Stojnicsflgscompyx23} are strong enough to achieve results that \emph{exactly match} the full rsb predictions.

\section{General bilinear model}
\label{sec:sqrtposhop}

We consider a general bilinear model. In Section \ref{sec:mathdescsqrtposhop}, we start by first introducing a mathematical description of the model and then in Section \ref{sec:repsqrtposhop}, we present a replica approach that can be used for studying such a model.

\subsection{Mathematical description of bilinear Hamiltonian based models}
\label{sec:mathdescsqrtposhop}

As we are focused on a statistical physics approach to studying general bilinear models, the following Hamiltonian is of key interest
\begin{equation}
\cH(G)=\y^TG\x,\label{eq:hamsqrt}
\end{equation}
where $G\in\mR^{m\times n}$ are the so-called quenched interactions. We typically consider scenarios where $m$ and $n$ are large and $\alpha=\lim_{n\rightarrow \infty}\frac{m}{n}$ (with $\alpha$, basically, remaining a constant as $n\rightarrow \infty$). The behavior of various physical interpretations that can be described through the above Hamiltonian are usually analyzed via the so-called partition function
\begin{equation}
Z_{sq}(\beta,G) = \sum_{\x\in\cX}\sum_{\y\in \cY}e^{\beta\cH(G)}= \sum_{\x\in\cX}\sum_{\y\in \cY}e^{\beta \y^TG\x},\label{eq:partfunsqrt}
\end{equation}
where $\beta>0$ is a parameter, typically called the inverse temperature.  For the concreteness, we set $\calX=\{\x^{(1)},\x^{(2)},\dots,\x^{(l)}\}$ with $\x^{(i)}\in \mR^n$ and $\calY=\{\y^{(1)},\y^{(2)},\dots,\y^{(l)}\}$ with $\y^{(i)}\in \mR^m$. Also, for the convenience of the presentation, we skip adding a factor $\frac{1}{2}$ typically seen in front of the Hamiltonian in statistical physics literature. It goes without saying, that the bilinear Hamiltonian from (\ref{eq:hamsqrt}) and its associated partition function from (\ref{eq:partfunsqrt}) are very generic and encompass many well known particular examples often studied in statistical physics. While the complete list of all included specializations is pretty much endless, we here single out a few most famous ones that are directly obtainable relying on the bilinear Hamiltonian from (\ref{eq:hamsqrt}) : 1) For $m=n$ and $\cY=\cX=\{-\frac{1}{\sqrt{n}},\frac{1}{\sqrt{n}}\}$, one has precisely the so-called Sherrington-Kirkpatrick (SK) model (see, e.g., \cite{SheKir72,Pan13a,Parisi80,Par80,Guerra03,Tal06}); 2) For $\cX=\{-\frac{1}{\sqrt{n}},\frac{1}{\sqrt{n}}\}$ and $\cY\in\mS^m$, one has the so-called square-root Hopfield model  (see, e.g., \cite{Hop82,PasFig78,Hebb49,PasShchTir94,ShchTir93,BarGenGueTan10,BarGenGueTan12,Tal98,StojnicMoreSophHopBnds10}); 3) For $\cY=\{-\frac{1}{\sqrt{n}},\frac{1}{\sqrt{n}}\}$ and $\cX=\{-\frac{1}{\sqrt{n}},\frac{1}{\sqrt{n}}\}$ one obtains the asymmetric Little model (see, e.g., \cite{Little74,StojnicAsymmLittBnds11,Stojnicnflgscompyx23,Stojnicsflgscompyx23}); 4) For $\cY=\mS_+^m$ (where $\mS_+^m$ is the positive orthant part of the $m$-dimensional unit sphere) one has the so-called positive perceptrons; In particular for $\cX=\{-\frac{1}{\sqrt{n}},\frac{1}{\sqrt{n}}\}$ one obtains the positive \emph{binary} perceptron (see, e.g., \cite{StojnicGardGen13,GarDer88,Gar88,StojnicDiscPercp13,KraMez89,GutSte90,KimRoc98}) whereas for  $\cX=\mS^n$ one obtains the positive \emph{spherical} perceptron (see, e.g., \cite{StojnicGardGen13,StojnicGardSphErr13,StojnicGardSphNeg13,GarDer88,Gar88,Schlafli,Cover65,Winder,Winder61,Wendel62,Cameron60,Joseph60,BalVen87,Ven86,SchTir02,SchTir03}).

Instead of dealing directly with the partition function, $Z(\beta,G)$, one usually considers the following, more appropriate, scaled $\log$, (average) version called (average) free energy
\begin{equation}
f_{sq}(\beta)=\lim_{n\rightarrow\infty}\frac{\mE_G\log{(Z_{sq}(\beta,G)})}{\beta \sqrt{n}}=\frac{\mE_G\log{(\sum_{\x\in\cX}\sum_{\y\in \cY} e^{\beta\cH(G)})}}{\beta \sqrt{n}},\label{eq:logpartfunsqrt}
\end{equation}
where $\mE_G$ stands for the expectation with respect to $G$ (throughout the paper, $\mE$ always stands for the expectation, and the subscript indicates the type of randomness with respect to which the expectation is taken). The thermodynamic limit ground state regime is typically of particular interest. In addition to $n\rightarrow\infty$, in such a regime, one also has
$\beta\rightarrow\infty$ and
\begin{eqnarray}
\lim_{\beta\rightarrow\infty}f_{sq}(\beta) & = &
\lim_{\beta,\rightarrow\infty}\frac{\mE_G\log{(Z_{sq}(\beta,G)})}{\beta \sqrt{n}}.\label{eq:limlogpartfunsqrt}
\end{eqnarray}
The main mathematical object that we study in this section is precisely the free energy given in (\ref{eq:logpartfunsqrt}). However, as one of our primary goals is to properly understand the ground state behavior, we ultimately use the studying of the general $\beta$ free energy to eventually deduce particular ground state behavior (\ref{eq:limlogpartfunsqrt}). Along the same lines, while the analysis that we present below, in principle, accounts for any $\beta$, in the interest of easing the exposition, we may, on occasion, neglect some of the terms that in the ground state regime play no significant role.

\subsection{A replica approach to bilinear Hamiltonian based models}
\label{sec:repsqrtposhop}

In this section we  present a replica framework that can be used for studying mathematical properties of $f_{sq}(\beta)$ from (\ref{eq:logpartfunsqrt}). To that end, we start by using the standard replica ideas \cite{}
\begin{equation}
\mE_G \log(Z_{sq}(\beta,G))=\lim_{n_r\rightarrow 0}\frac{\log (\mE_G(Z_{sq}(\beta,G))^{n_r})}{n_r},\label{eq:logZn0sqrt}
\end{equation}
where $n_r$ stands for what is called the number of system's replicas.
Clearly, we view the free energy in a statistical context and along the same lines assume that the elements of $G$ are i.i.d.
standard normal random variables. Also, it is clear that if one can handle $\mE_G \log(Z_{sq}(\beta,G))$ then one can handle $\lim_{n\rightarrow \infty}\mE_G f_{sq}(n,\beta,G)$ and even more obviously $\lim_{\beta,n\rightarrow \infty}\mE_G f_{sq}(n,\beta,G)$. Hence, we focus on the right hand side of (\ref{eq:logZn0sqrt}) and start with the standard replica arguments that are typically used to handle $E_HZ_{sq}(\beta,H)^{n_r}$ (a possible disagreement with the axioms of mathematics that may happen below is considered as expected within the replica theory framework).  We first write the replicated form of partition function in the following expanded form
\begin{eqnarray}
\mE_G Z_{sq}(\beta,G)^{n_r} & = & \mE_G\left (\sum_{\x\in \cX}\sum_{\y\in\cY} e^{\beta\y^TH\x}\right )^{n_r}\nonumber \\
& = & \mE_G\left (\sum_{X\in \cX_1\times \cX_2\times \dots \times \cX_{n_r}}
\sum_{Y\in \cY_1\times \cY_2\times \dots \times \cY_{n_r}}
e^{\beta \tr(Y^THX)}\right )\nonumber \\
& = & \left (\sum_{X\in \cX_1\times \cX_2\times \dots \times \cX_{n_r}}
\sum_{Y\in \cY_1\times \cY_2\times \dots \times \cY_{n_r}}
\mE_G e^{\beta \tr(Y^THX)}\right )\nonumber \\
& = & \left (\sum_{X\in \cX_1\times \cX_2\times \dots \times \cX_{n_r}}
\sum_{Y\in \cY_1\times \cY_2\times \dots \times \cY_{n_r}}
e^{\frac{\beta^2}{2}\tr(Y^TYX^TX)}\right ),\label{eq:rep1sqrt}
\end{eqnarray}
where $\cY_i=\cY,1\leq i\leq n_r$, and $\cX_i=\cX,1\leq i\leq n_r$. The following constraint matrices $P$ and $Q$ play one of the key roles in the entire replica formalism
\begin{eqnarray}
P & = & X^TX\nonumber \\
Q & = & Y^TY.\label{eq:defQYsqrt}
\end{eqnarray}
To ensure that the above constraints remain in place over all $X$, we write
\begin{equation}
\sum_{X\in \cX_1\times \cX_2\times \dots \times \cX_{n_r}}
\delta(P-X^T X)  =  \frac{1}{2\pi^{n_r^2}}
\int d\Lambda e^{\tr(\Lambda P)}e^{n\log\left (\sum_{\rx\in \rcX}e^{-\tr(\Lambda \rx(\rx)^T)}\right )},\label{eq:defQconssqrt}
\end{equation}
where $\rcX=\cX^{n_r}$. Analogous strategy can be applied for $Y$
\begin{eqnarray}
\sum_{Y\in \cY_1\times \cY_2\times \dots \times \cY_{n_r}}
\delta(Q-Y^T Y) & = & \frac{1}{2\pi^{n_r^2}}\sum_{Y\in \cY_1\times \cY_2\times \dots \times \cY_{n_r}}
\int d\Gamma e^{\tr(\Gamma Q)-\tr(\Gamma Y^TY)}\nonumber \\
& = & \frac{1}{2\pi^{n_r^2}}
\int d\Gamma e^{\tr(\Gamma Q)}\sum_{Y\in \cY_1\times \cY_2\times \dots \times \cY_{n_r}}e^{-tr(\Gamma Y^TY)}\nonumber \\
& = & \frac{1}{2\pi^{n_r^2}}
\int d\Gamma e^{\tr(\Gamma Q)}e^{n\log\left (\sum_{Y\in \cY_1\times \cY_2\times \dots \times \cY_{n_r}}e^{-tr(\Gamma Y^TY)}\right )^{\frac{1}{n}}},\nonumber\\\label{eq:defPconssqrt}
\end{eqnarray}
Connecting (\ref{eq:rep1sqrt}), (\ref{eq:defQconssqrt}), and (\ref{eq:defPconssqrt}) we obtain
\begin{align}
\mE_G Z_{sq}(\beta,H)^{n_r}
 & = \left (\sum_{X\in \cX_1\times \cX_2\times \dots \times \cX_{n_r}}
\sum_{Y\in \cY_1\times \cY_2\times \dots \times \cY_{n_r}}
e^{\frac{\beta^2}{2}\tr(Y^TYX^TX)}\right )\nonumber \\
& = \left (\frac{1}{2\pi^{n_r^2}}\right )^2
\int d\Lambda e^{\frac{\beta^2}{2}\tr(QP)} e^{\tr(\Lambda P+\Gamma Q)}\nonumber\\
&\quad \times
e^{n\log\left (\sum_{X\in \cX_1\times \cX_2\times \dots \times \cX_{n_r}}e^{-\tr(\Lambda X^TX)}\right )^{\frac{1}{n}}}
e^{n\log\left (\sum_{Y\in \cY_1\times \cY_2\times \dots \times \cY_{n_r}}e^{-\tr(\Gamma Y^TY)}\right )^{\frac{1}{n}}}\nonumber \\
& =   \left (\frac{1}{2\pi^{n_r^2}}\right )^2
\int d\Lambda e^{ns_{sq}(\beta,\Lambda,\Gamma,P,Q)},\label{eq:rep2sqrt}
\end{align}
where
\begin{equation}
s_{sq}(\beta,\Lambda,\Gamma,P,Q)  =\frac{\beta^2}{2n}\tr(QP)+\frac{\tr\lp \Lambda P\rp}{n}+\frac{\tr\lp  \Gamma Q\rp}{n}+
\log\left (\sum_{X \in S_X}e^{-\tr(\Gamma X^TX)}\right )^{\frac{1}{n}}
+\log\left (\sum_{Y \in S_Y}e^{-\tr(\Gamma Y^TY)}\right )^{\frac{1}{n}}, \label{eq:defssaddlesqrt}
\end{equation}
and $S_Y=\cY_1\times \cY_2\times \dots \times \cY_{n_r}$ and $S_X=\cX_1\times \cX_2\times \dots \times \cX_{n_r}$. One then utilizes the saddle point method which requires that the conditions $\frac{ds_{sq, saddle}(\beta,\Lambda,\Gamma,P,Q)}{dP}=0$ and
$\frac{ds_{sq}(\beta,\Lambda,\Gamma,P,Q)}{dQ}=0$ are satisfied, i.e.,
\begin{align}
\frac{ds_{sq}(\beta,\Lambda,\Gamma,P,Q)}{dP} & =\frac{\beta^2}{2n} Q+\frac{\Lambda}{n}=0\nonumber \\
\frac{ds_{sq}(\beta,\Lambda,\Gamma,P,Q)}{dQ} & =\frac{\beta^2}{2n} P+\frac{\Gamma}{n}=0.\label{eq:saddlelamsqrt}
\end{align}
From (\ref{eq:saddlelamsqrt}) one then easily obtains
\begin{eqnarray}
\Lambda_{saddle} & = & -\frac{\beta^2}{2}Q\nonumber\\
\Gamma_{saddle} & = & -\frac{\beta^2}{2}P.\label{eq:saddlelam1sqrt}
\end{eqnarray}
Moreover, connecting (\ref{eq:defssaddlesqrt}) and (\ref{eq:saddlelam1sqrt}), we find
\begin{equation}
s_{sq}(\beta,P,Q)  = -\frac{\beta^2}{2n}\tr(QP)
 +\log\left (\sum_{X \in S_X}e^{\frac{\beta^2}{2}\tr(Q X^TX)}\right )^{\frac{1}{n}}
 +\log\left (\sum_{Y \in S_Y}e^{\frac{\beta^2}{2}\tr(P Y^TY)}\right )^{\frac{1}{n}}.\label{eq:defssaddle1sqrt}
\end{equation}
Combining (\ref{eq:logZn0sqrt}), (\ref{eq:rep2sqrt}), (\ref{eq:defssaddlesqrt}), and (\ref{eq:defssaddle1sqrt}), we further have
\begin{multline}
\lim_{n\rightarrow \infty}\frac{\mE_G \log(Z_{sq}(\beta,G))}{n}  = \lim_{n\rightarrow \infty}\lim_{n_r\rightarrow 0}\frac{\log (\mE_G(Z_{sq}(\beta,G))^{n_r})}{nn_r}
=\lim_{n\rightarrow \infty}\lim_{n_r\rightarrow 0} \frac{s_{saddle}(\beta,P,Q)}{n_r}
\\ \hspace{-.0in}=\lim_{n\rightarrow \infty} \lim_{n_r\rightarrow 0}\frac{-\frac{\beta^2}{2n}\tr(QP)
+\log\left (\sum_{X \in S_X}e^{\frac{\beta^2}{2}\tr(Q X^TX)}\right )^{\frac{1}{n}}
+\log\left (\sum_{Y \in S_Y}e^{\frac{\beta^2}{2}\tr(P Y^TY)}\right )^{\frac{1}{n}}}{n_r},\\\label{eq:ElogZ1sqrt}
\end{multline}
where  $P$ and $Q$ are presumably the saddle points of $s_{sq}(\beta,P,Q)$. Main difficulties almost completely disappeared.  However, the remaining one, regarding the choice of the saddle points $P$ and $Q$, is quite a problem on its own even within the replica theory. In what follows, we assume that $P$ and $Q$ have the same structure as does the corresponding matrix in the SK model. As is now well known, such a structure follows the so-called Parisi ansatz introduced in a sequence of Parisi's breakthrough papers \cite{Par79,Par80,Par83,Parisi80} and proven as fully rigorous for the SK model (see, e.g., \cite{Guerra03,Tal06,Pan13,Pan13a,Pan10,Pan10a}).

Since the Parisi ansatz in general assumes a potentially infinitely long hierarchial structure, we present in full generality its first level, sketch the second, and use these first two levels to later on deduce the final results for all other levels. Before proceeding further though, we should point out an alternative way of deriving the results that we present below. Namely, one can eliminate one of the matrices $P$ or $Q$ by taking the derivative with respect to the other. That effectively leaves only one of the matrices to be parameterized, which then might seem as a preferable route. On the other hand, the presentation that we choose below, is, in our view, a more natural way to uncover what really happens. After finishing the presentation, it will be rather clear  how one could proceed with the elimination of one of the matrices $P$ or $Q$ (i.e. all ingredients needed for such a direction will be obtained through the analysis that we present below).

\subsubsection{A $1$-rsb scheme based on Parisi ansatz}
\label{sec:1rsbsqrtposhop}

Along the lines of what we just discussed above, we start by observing that within the Parisi ansatz, one has that, on the first level of the symmetry breaking (1-rsb), $P$ and $Q$ have the following structure
\begin{align}
P^{(1)}=(1-p_1)I_{n_r}+I_{\frac{n_r}{m_1}}\otimes (p_1-p_0)U_{m_1\times m_1}+ p_0 U_{n_r\times n_r}\nonumber \\
Q^{(1)}=(1-\bq_1)I_{n_r}+I_{\frac{n_r}{m_1}}\otimes (\bq_1-\bq_0)U_{m_1\times m_1}+ \bq_0 U_{n_r\times n_r},\label{eq:Qsaddle1rsbsqrt}
\end{align}
where $I$ is the identity matrix  and $U$ is the matrix of all ones (the dimensions of these matrices are  specified in their  subscripts). To facilitate the exposition, we assumed in (\ref{eq:Qsaddle1rsbsqrt}) that the elements of $\cX$ and $\cY$ have unit norms (everything that follows can be repeated line by line without such an assumption but the writing is substantially less elegant). Even if one accepts that
$P$ and $Q$ should share the same type of internal structure, it is not a priori clear why the dimensions of the submatrices that make up such structure should be the same for both $P^{(1)}$ and $Q^{(1)}$. We, nonetheless, find it useful to make such an assumption.

To handle (\ref{eq:defssaddle1sqrt}) one then needs the eigenvalues of $ P^{(1)}$ and $Q^{(1)}$. It is not that hard to verify that the eigenvalues of $P^{(1)}$ are as follows,
\begin{eqnarray}
n_r-\frac{n_r}{m_1} \quad \mbox{eigenvalues equal to}: & & \lambda_{1}^{(1)}=(1-p_1)\nonumber \\
\frac{n_r}{m_1}-1 \quad \mbox{eigenvalues equal to}: & & \lambda_{2}^{(1)}=(1-p_1)+m_1(p_1-p_0)\nonumber \\
1 \quad \mbox{eigenvalue equal to}: & & \lambda_{3}^{(1)}=(1-p_1)+ m_1(p_1-p_0)+n_rp_0,\label{eq:eigQsaddle1rsbsqrt}
\end{eqnarray}
and
that the eigenvalues of $Q^{(1)}$ are as follows,
\begin{eqnarray}
n_r-\frac{n_r}{m_1} \quad \mbox{eigenvalues equal to}: & & \bar{\lambda}_{1}^{(1)}=(1-\bq_1)\nonumber \\
\frac{n_r}{m_1}-1 \quad \mbox{eigenvalues equal to}: & & \bar{\lambda}_{2}^{(1)}=(1-\bq_1)+m_1(\bq_1-\bq_0)\nonumber \\
1 \quad \mbox{eigenvalue equal to}: & & \bar{\lambda}_{3}^{(1)}=(1-\bq_1)+ m_1(\bq_1-\bq_0)+n_r\bq_0.\label{eq:eigbarQsaddle1rsbsqrt}
\end{eqnarray}
 Now, it is possible to proceed with the computation of $s_{sq}(\beta,P^{(1)},Q^{(1)})$.  First, we recognize the following
\begin{eqnarray}
\lim_{n_r\rightarrow 0,n\rightarrow \infty} \frac{s_{sq}(\beta,P^{(1)},Q^{(1)})}{n_r}
& = &  \lim_{n_r\rightarrow 0,n\rightarrow \infty}\frac{1}{n_r}
\Bigg( \Bigg.   -\frac{\beta^2}{2n}\tr(Q^{(1)}P^{(1)})
+  \log\left (\sum_{X \in S_X}e^{\frac{\beta^2}{2}\tr(Q^{(1)} X^TX)}\right )^{\frac{1}{n}} \nonumber \\
& & +  \log\left (\sum_{Y \in S_Y}e^{\frac{\beta^2}{2}\tr(P^{(1)} Y^TY)}\right )^{\frac{1}{n}}  \Bigg.\Bigg) \nonumber \\
& = &  I_{Q,P}^{(1)}+I_{Q}^{(1)}+I_{P}^{(1)}, \label{eq:defssaddleIssqrt}
\end{eqnarray}
where
\begin{eqnarray}
 I_{Q,P}^{(1)} & = & \lim_{n_r\rightarrow 0,n\rightarrow \infty}\frac{-\frac{\beta^2}{2n}\tr(Q^{(1)}P^{(1)})}{n_r}\nonumber \\
I_{Q}^{(1)}& = & \lim_{n_r\rightarrow 0,n\rightarrow \infty}\frac{\log\left (\sum_{X \in S_X}e^{\frac{\beta^2}{2}\tr(Q^{(1)} X^TX)}\right )^{\frac{1}{n}}}{n_r}\nonumber \\
 I_{P}^{(1)}& = & \lim_{n_r\rightarrow 0,n\rightarrow \infty}\frac{\log\left (\sum_{Y \in S_Y}e^{\frac{\beta^2}{2}\tr(P^{(1)} Y^TY)}\right )^{\frac{1}{n}}}{n_r}. \label{eq:saddleIssqrt}
\end{eqnarray}
We below split the computation into three steps where we separately compute each of $I_{Q,P}^{(1)}$, $I_{P}^{(1)}$, and $I_{Q}^{(1)}$.

\noindent \emph{\underline{1) Determining $I_{Q,P}^{(1)}$}}
\vspace{.1in}

\noindent We directly calculate $\tr(Q^{(1)}P^{(1)})$ by relying on the above determined eigenvalues
\begin{align}
\hspace{-.0in}I_{Q,P}^{(1)} & =\lim_{n_r\rightarrow 0,n\rightarrow \infty} -\frac{\beta^2}{2nn_r}
\Bigg(\Bigg. \left (n_r-\frac{n_r}{m_1}\right )(1-\bq_1)(1-p_1) \nonumber \\
& \quad
+\left (\frac{n_r}{m_1}-1\right )\left ( 1-\bq_1+m_1(\bq_1-\bq_0)\right )\left ( 1-p_1+m_1(p_1-p_0)\right )\nonumber \\
& \quad +\left ( 1-\bq_1+m_1(\bq_1-\bq_0)+n_r\bq_0\right )\left ( 1-p_1+m_1(p_1-p_0)+n_rp_0\right ) \Bigg.\Bigg)\nonumber \\
& =\lim_{n_r\rightarrow 0,n\rightarrow \infty} -\frac{\beta^2}{2nn_r}(n_r(1-\bq_1)(1-p_1)
+n_r(\bq_1-\bq_0)(1-p_1)+n_r(p_1-p_0)(1-\bq_1)  \nonumber \\
& \quad +n_rm_1(\bq_1-\bq_0)(p_1-p_0)
+n_rp_0\left ( 1-\bq_1+m_1(\bq_1-\bq_0)\right )+n_r\bq_0\left ( 1-p_1+m_1(p_1-p_0)\right ) +n_r^2\bq_0p_0)\nonumber \\
& =\lim_{n_r\rightarrow 0,n\rightarrow \infty} -\frac{\beta^2}{2nn_r}(n_r(1-\bq_1)(1-p_1)
+n_r\bq_1(1-p_1)+n_rp_1(1-\bq_1) +n_rm_1(\bq_1-\bq_0)(p_1-p_0)\nonumber \\
& \quad +n_rp_0m_1(\bq_1-\bq_0)+n_r\bq_0 m_1(p_1-p_0) )\nonumber \\
& =\lim_{n\rightarrow \infty} -\frac{\beta^2}{2n}(1-\bq_1p_1 +m_1(\bq_1p_1-\bq_0p_0) ).\label{eq:finalsaddleI1sqrt}
\end{align}

\vspace{.1in}

\noindent \emph{\underline{2) Determining $I_{P}^{(1)}$}}
\vspace{.1in}

\noindent  We start with
\begin{equation}
 I_{P}^{(1)}= \lim_{n_r\rightarrow 0,n\rightarrow \infty}\frac{\log\left (\sum_{Y \in S_Y}e^{\frac{\beta^2}{2}\tr(P^{(1)} Y^TY)}\right )^{\frac{1}{n}}}{n_r}, \label{eq:saddleI3sqrt}
\end{equation}
and introduce
\begin{equation}
\hspace{-0in}g_{tot}^{(1)}=\frac{\beta^2}{2}\tr(P^{(1)}Y^TY)=\frac{n_r\beta^2(1-p_1)}{2}
+\frac{\beta^2(p_1-p_0)\sum_{i=1}^{\frac{n_r}{m_1}}\|\1^TY^{(i)}\|_2^2}
{2}
+\frac{\beta^2 p_0 \|\1^TY^T\|_2^2}{2},\label{eq:g3totsqrt}
\end{equation}
where
\begin{equation}
Y^{(i)}=Y^T_{(i-1)m_1+1:im_1,1:n},1\leq i\leq \frac{n_r}{m_1}.\label{eq:defYisqrt}
\end{equation}
Moreover, we write
\begin{equation}
\hspace{-0in}g_{tot}^{(1)}=\frac{\beta^2}{2}\tr(P^{(1)}Y^TY)=\frac{n_r\beta^2(1-p_1)}{2}
+\frac{\alpha_{1}^2\sum_{i=1}^{\frac{n_r}{m_1}}\|\1^TY^{(i)}\|_2^2}
{2}
+\frac{\alpha_{2}^2 \|\1^TY^T\|_2^2}{2},\label{eq:g3tot1sqrt}
\end{equation}
where
\begin{eqnarray}
\alpha_{1} & = & \beta\sqrt{p_1-p_0}\nonumber \\
\alpha_{2} & = & \beta\sqrt{p_0},\label{eq:alphabar12sqrt}
\end{eqnarray}
and, both here and throughout the paper, $\1$ is a column vector of all ones of appropriate dimension. Combining (\ref{eq:saddleI3sqrt}) and (\ref{eq:g3tot1sqrt}), we have
\begin{align}
\hspace{-.0in}I_{P}^{(1)} & = \lim_{n_r\rightarrow 0,n\rightarrow \infty}\frac{\log\left (\sum_{Y \in S_Y}e^{\frac{n\beta^2}{2}\tr(P^{(1)} Y^TY)}\right )^{\frac{1}{n}}}{nn_r}
=\lim_{n_r\rightarrow 0,n\rightarrow \infty}\frac{\log\left (\sum_{Y\in S_Y}e^{g_{tot}^{(1)}}\right )}{nn_r}\nonumber \\
& =  \lim_{n_r\rightarrow 0,n\rightarrow \infty}
\frac{\beta^2(1-p_1)}{2n}+
\frac{\log\left (\sum_{Y\in S_Y}e^{\frac{\alpha_{1}^2\sum_{i=1}^{\frac{n_r}{m_1}}\|\1^TY^{(i)}\|_2^2}
{2}}\int_{\z^{(2)}}e^{\alpha_{2} \sum_{i=1}^{\frac{n_r}{m_1}}\1^TY^{(i)}\z^{(2)}}e^{-\frac{\|\z^{(2)}\|_2^2}{2}}d\z^{(2)}\right )}{nn_r}\nonumber \\
& = \lim_{n_r\rightarrow 0,n\rightarrow \infty}
 \frac{\beta^2(1-p_1)}{2n}+
\frac{\log\left (\int_{\z^{(2)}}\sum_{Y\in S_Y}e^{\frac{\alpha_{1}^2\sum_{i=1}^{\frac{n_r}{m_1}}\|\1^TY^{(i)}\|_2^2}
{2}}e^{\alpha_{2}\sum_{i=1}^{\frac{n_r}{m_1}}\1^TY^{(i)}\z^{(2)}}e^{-\frac{\|\z^{(2)}\|_2^2}{2}}d\z^{(2)}\right )}{nn_r}\nonumber \\
& =  \lim_{n_r\rightarrow 0,n\rightarrow \infty}
\frac{\beta^2(1-p_1)}{2n}+
\frac{\log\left (\int_{\z^{(2)}}\left (\sum_{Y^{(i)}\in S_Y^{(m_1)}}e^{\frac{\alpha_{1}^2\|\1^TY^{(i)}\|_2^2}
{2}}e^{\alpha_{2}\1^TY^{(i)}\z^{(2)}}\right )^{\frac{n_r}{m_1}}e^{-\frac{\|\z^{(2)}\|_2^2}{2}}d\z^{(2)}\right )}{nn_r}, \nonumber \\\label{eq:saddleI3gtotsqrt}
\end{align}
where $S_Y^{(m_1)}=\underbrace{\cY\times \cY \times \dots \times \cY}_{m_1\hspace{.03in} times}$. A bit of further algebraic transformations gives
\begin{align}
\hspace{-.0in}I_{P}^{(1)} & =  \lim_{n_r\rightarrow 0,n\rightarrow \infty}
\frac{\beta^2(1-p_1)}{2n}+
\frac{\log\left (\int_{\z^{(2)}}e^{\log\left (\sum_{Y^{(i)}\in S_Y^{(m_1)}}e^{\frac{\alpha_{1}^2\|\1^TY^{(i)}\|_2^2}
{2}}e^{\alpha_{2} \1^TY^{(i)}\z^{(2)}}\right )^{\frac{n_r}{m_1}}}e^{-\frac{\|\z^{(2)}\|_2^2}{2}}d\z^{(2)}\right )}{nn_r}\nonumber \\
& = \lim_{n_r\rightarrow 0,n\rightarrow \infty}
\frac{\beta^2(1-p_1)}{2n}+
\frac{\log\left (\int_{\z^{(2)}}e^{\frac{n_r}{m_1}\log\left (\sum_{Y^{(i)}\in S_Y^{(m_1)}}e^{\frac{\alpha_{1}^2\|\1^TY^{(i)}\|_2^2}
{2}}e^{\alpha_{2}\1^TY^{(i)}\z^{(2)}}\right )}e^{-\frac{\|\z^{(2)}\|_2^2}{2}}d\z^{(2)}\right )}{nn_r}.\nonumber \\ \label{eq:saddleI3gtot1sqrt}
\end{align}
As $n_r\rightarrow 0$ we also have
\begin{eqnarray}
\hspace{-.0in}I_{P}^{(1)} & = &  \lim_{n\rightarrow \infty} \frac{\beta^2(1-p_1)}{2n} \nonumber \\
& & +\lim_{n_r\rightarrow 0,n\rightarrow \infty}\frac{\log\left (\int_{\z^{(2)}}e^{\frac{n_r}{m_1}\log\left (\sum_{Y^{(i)}\in S_Y^{(m_1)}}e^{\frac{\alpha_{1}^2\|\1^TY^{(i)}\|_2^2}
{2}}e^{\alpha_{2}\1^TY^{(i)}\z^{(2)}}\right )}e^{-\frac{\|\z^{(2)}\|_2^2}{2}}d\z^{(2)}\right )}{nn_r}  \nonumber \\
\hspace{-.0in} & = &  \lim_{n\rightarrow \infty} \frac{\beta^2(1-p_1)}{2n} \nonumber \\
& & +\lim_{n_r\rightarrow 0,n\rightarrow \infty}\frac{\log\left (\int_{\z^{(2)}}\left (1+\frac{n_r}{m_1}\log\left (\sum_{Y^{(i)}\in S_Y^{(m_1)}}e^{\frac{\alpha_{1}^2\|\1^TY^{(i)}\|_2^2}
{2}}e^{\alpha_{2}\1^TY^{(i)}\z^{(2)}}\right )\right )e^{-\frac{\|\z^{(2)}\|_2^2}{2}}d\z^{(2)}\right )}{nn_r} \nonumber \\
\hspace{-.0in} & =  &  \lim_{n\rightarrow \infty} \frac{\beta^2(1-p_1)}{2n} \nonumber \\
& & +\lim_{n_r\rightarrow 0,n\rightarrow \infty}\frac{\log\left (1+\frac{n_r}{m_1}\int_{\z^{(2)}}\log\left (\sum_{Y^{(i)}\in S_Y^{(m_1)}}e^{\frac{\alpha_{1}^2\|\1^TY^{(i)}\|_2^2}
{2}}e^{\alpha_{2}\1^TY^{(i)}\z^{(2)}}\right )e^{-\frac{\|\z^{(2)}\|_2^2}{2}}d\z^{(2)}\right )}{nn_r} \nonumber  \\
 & = &  \lim_{n\rightarrow \infty}\frac{\beta^2(1-p_1)}{2n}+\lim_{n\rightarrow \infty}\frac{\frac{1}{m_1}\int_{\z^{(2)}}\log\left (\sum_{Y^{(i)}\in S_Y^{(m_1)}}e^{\frac{\alpha_{1}^2\|\1^TY^{(i)}\|_2^2}
{2}}e^{\alpha_{2}\1^TY^{(i)}\z^{(2)}}\right )e^{-\frac{\|\z^{(2)}\|_2^2}{2}}d\z^{(2)}}{n}. \nonumber \\
\label{eq:saddleI3gtot2sqrt}
\end{eqnarray}
Moreover,
\begin{align}
  I_{P}^{(1)}   & =  \lim_{n\rightarrow \infty} \frac{\beta^2(1-p_1)}{2n} \nonumber \\
 & \quad +\lim_{n\rightarrow \infty}\frac{\frac{1}{m_1}\int_{\z^{(2)}}\log\left (\sum_{Y^{(i)}\in S_Y^{(m_1)}}\int_{\z^{(1)}}e^{\alpha_{1}\1^TY^{(i)}\z^{(1)}
}e^{-\frac{\|\z^{(1)}\|_2^2}{2}}d\z^{(1)}e^{\alpha_{2}\1^TY^{(i)}\z^{(2)}}\right )e^{-\frac{\|\z^{(2)}\|_2^2}{2}}d\z^{(2)}}{n} \nonumber \\
\hspace{-.0in} & =  \lim_{n_r\rightarrow 0,n\rightarrow \infty} \frac{\beta^2(1-p_1)}{2n}  \nonumber \\
 & \quad +\lim_{n\rightarrow \infty}\frac{\frac{1}{m_1}\int_{\z^{(2)}}\log\left (\int_{\z^{(1)}}\sum_{Y^{(i)}\in S_Y^{(m_1)}}e^{\alpha_{1}\1^TY^{(i)}\z^{(1)}
}e^{\alpha_{2} \1^TY^{(i)}\z^{(2)}}e^{-\frac{\|\z^{(1)}\|_2^2}{2}}d\z^{(1)}\right )e^{-\frac{\|\z^{(2)}\|_2^2}{2}}d\z^{(2)}}{n} \nonumber \\
\hspace{-.0in} & =  \lim_{n\rightarrow \infty} \frac{\beta^2(1-p_1)}{2n}  \nonumber \\
 & \quad +\lim_{n\rightarrow \infty}\frac{\frac{1}{m_1}\int_{\z^{(2)}}\log\left (\int_{\z^{(1)}}\left (\sum_{\y\in \cY}e^{\y^T\left ( \alpha_{1} \z^{(1)}+\alpha_{2} \z^{(2)}\right )}\right )^{m_1}e^{-\frac{\|\z^{(1)}\|_2^2}{2}}d\z^{(1)}\right )e^{-\frac{\|\z^{(2)}\|_2^2}{2}}d\z^{(2)}}{n}. \nonumber \\ \label{eq:saddleI3gtot3sqrt}
\end{align}
It is not that difficult to see that $I_{P}^{(1)}$ and (\ref{eq:saddleI3gtot3sqrt}) can, in fact, be rewritten in an even more compact form
\begin{equation}
  I_{P}^{(1)}= \lim_{n\rightarrow \infty}\frac{E_{\z^{(2)}}\log\left (E_{\z^{(1)}}\left ( E_{\z^{(0)}}\left (\sum_{\y\in\cY}e^{\y^T ( \alpha_{0}\z^{(0)}+\alpha_{1}\z^{(1)}+\alpha_{2} \z^{(2)})}
\right )\right )^{m_1}\right )}{m_1n},\label{eq:finalsaddleI3sqrtcomp}
\end{equation}
where
\begin{eqnarray}
\alpha_{0} & = & \beta\sqrt{1-p_1}.\label{eq:alpha0sqrt}
\end{eqnarray}

\vspace{.1in}

\noindent \emph{\underline{3) Determining $I_{Q}^{(1)}$}}
\vspace{.1in}

\noindent Since $I_{Q}^{(1)}$ is structurally identical to $I_{P}^{(1)}$, following step-by-step the above derivation, we obtain, as a complete analogue to (\ref{eq:saddleI3gtot3sqrt}),
\begin{align}
  I_{Q}^{(1)} \hspace{-.0in} & =  \lim_{n\rightarrow \infty} \frac{\beta^2(1-\bq_1)}{2n}  \nonumber \\
 & \quad +\lim_{n\rightarrow \infty}\frac{\frac{1}{m_1}\int_{\z^{(2)}}\log\left (\int_{\z^{(1)}}\left (\sum_{\x\in \cX} e^{\x^T\left ( \bar{\alpha}_{1}\sqrt{n}\z^{(1)}+\bar{\alpha}_{2} \sqrt{n}\z^{(2)}\right )}\right )^{m_1}e^{-\frac{\|\z^{(1)}\|_2^2}{2}}d\z^{(1)}\right )e^{-\frac{\|\z^{(2)}\|_2^2}{2}}d\z^{(2)}}{n}, \nonumber \\ \label{eq:saddleI2gtot3sqrt}
\end{align}
or in a more compact form
\begin{equation}
  I_{P}^{(1)}= \lim_{n\rightarrow \infty}\frac{E_{\z^{(2)}}\log\left (E_{\z^{(1)}}\left ( E_{\z^{(0)}}\left (\sum_{\x\in\cX}e^{\x^T ( \bar{\alpha}_{0}\sqrt{n}\z^{(0)}+\bar{\alpha}_{1}\sqrt{n}\z^{(1)}+\bar{\alpha}_{2} \sqrt{n}\z^{(2)})}
\right )\right )^{m_1}\right )}{m_1n},\label{eq:finalsaddleI2sqrtcomp}
\end{equation}
where
\begin{eqnarray}
\bar{\alpha}_{0} & = & \beta\sqrt{1-\bq_1}\nonumber \\
\bar{\alpha}_{1} & = & \beta\sqrt{\bq_1-\bq_0}\nonumber \\
\bar{\alpha}_{2} & = & \beta\sqrt{\bq_0}.\label{eq:baralpha123sqrt}
\end{eqnarray}
From (\ref{eq:limlogpartfunsqrt}), we have for $\beta=\bar{\beta}\sqrt{n}$ (where $\bar{\beta}$ does \emph{not} change as $n$ grows)
\begin{eqnarray}
 f_{sq}(\beta) & = &
\lim_{n\rightarrow\infty}\frac{\mE_G \log{(Z_{sq}(\beta,G)})}{\beta \sqrt{n}}
=\lim_{n\rightarrow\infty}\frac{\mE_G \log{(Z_{sq}(\beta,G)})}{\frac{\beta}{\sqrt{n}} n}.\label{eq:limlogpartfunsqrtaa00}
\end{eqnarray}
A combination of (\ref{eq:ElogZ1sqrt}), (\ref{eq:defssaddleIssqrt}), (\ref{eq:finalsaddleI1sqrt}), (\ref{eq:finalsaddleI3sqrtcomp}), and (\ref{eq:finalsaddleI2sqrtcomp}), gives $\bar{f}_{sq}^{(1)}(\beta)$ as a $1$-rsb estimate of $f_{sq}(\beta)$,
\begin{eqnarray}
\bar{f}_{sq}^{(1)}(\beta) &  = & -\lim_{n\rightarrow \infty} \frac{\beta^2}{2\frac{\beta}{\sqrt{n}}n}(1-\bq_1p_1 +m_1(\bq_1p_1-\bq_0p_0) ) \nonumber \\
& & +\lim_{n\rightarrow \infty}\frac{E_{\z^{(2)}}\log\left (E_{\z^{(1)}}\left ( E_{\z^{(0)}}\left (\sum_{\x\in\cX}e^{\x^T ( \bar{\alpha}_{0}\z^{(0)}+\bar{\alpha}_{1}\z^{(1)}+\bar{\alpha}_{2}\z^{(2)})}
\right )\right )^{m_1}\right )}{\frac{\beta}{\sqrt{n}} m_1n} \nonumber \\
& & +\lim_{n\rightarrow \infty}\frac{E_{\z^{(2)}}\log\left (E_{\z^{(1)}}\left ( E_{\z^{(0)}}\left (\sum_{\y\in\cY}e^{\y^T ( \alpha_{0}\z^{(0)}+\alpha_{1}\z^{(1)}+\alpha_{2}\z^{(2)})}
\right )\right )^{m_1}\right )}{\frac{\beta}{\sqrt{n}} m_1n}.\label{eq:finalfreeenergycompsqrt}
\end{eqnarray}
In fact, one can go a step further and obtain the following
\begin{eqnarray}
\bar{f}_{sq}^{(1)}(\beta) & = &
-\lim_{n\rightarrow \infty}
\frac{E_{z_2}\log \left ( E_{z_1}\left (E_{z_0}e^{(\tilde{\alpha}_{0}  z_0+\tilde{\alpha}_{1}  z_1+\tilde{\alpha}_{2}  z_2)}\right )^{m_1}\right )}{\frac{\beta}{\sqrt{n}}m_1n}
\\
& & +\lim_{n\rightarrow \infty}\frac{E_{\z^{(2)}}\log\left (E_{\z^{(1)}}\left ( E_{\z^{(0)}}\left (\sum_{\x\in\cX}e^{\x^T ( \bar{\alpha}_{0}\z^{(0)}+\bar{\alpha}_{1}\z^{(1)}+\bar{\alpha}_{2}\z^{(2)})}
\right )\right )^{m_1}\right )}{\frac{\beta}{\sqrt{n}} m_1n} \nonumber \\
& & +\lim_{n\rightarrow \infty}\frac{E_{\z^{(2)}}\log\left (E_{\z^{(1)}}\left ( E_{\z^{(0)}}\left (\sum_{\y\in\cY}e^{\y^T ( \alpha_{0}\z^{(0)}+\alpha_{1}\z^{(1)}+\alpha_{2}\z^{(2)})}
\right )\right )^{m_1}\right )}{\frac{\beta}{\sqrt{n}} m_1n},
\label{eq:finalfreeenergycomp1sqrt}
\end{eqnarray}
where
\begin{eqnarray}
\tilde{\alpha}_{0} & = & \beta\sqrt{1-\bq_1p_1}\nonumber \\
\tilde{\alpha}_{1} & = & \beta\sqrt{\bq_1p_1-\bq_0p_0}\nonumber \\
\tilde{\alpha}_{2} & = & \beta\sqrt{\bq_0p_0}.\label{eq:tildealpha123sqrt}
\end{eqnarray}
The following then summarizes what we presented in this subsection.
\begin{observation} (\textbf{1-rsb})
Let the thermodynamic limit of the averaged free energy of generic bilinear Hamiltonian based model, $f_{sq}(\beta)$, be as defined in (\ref{eq:logpartfunsqrt}). Let matrices $P^{(1)}$
and $Q^{(1)}$ be as defined in (\ref{eq:Qsaddle1rsbsqrt}). Further let $\alpha$, $\bar{\alpha}$, and $\tilde{\alpha}$ sequences be as defined in
(\ref{eq:alphabar12sqrt}), (\ref{eq:alpha0sqrt}), (\ref{eq:baralpha123sqrt}), and (\ref{eq:tildealpha123sqrt})
Then, for appropriately selected parameters in the definition of $P^{(1)}$ and $Q^{(1)}$, one has that the  $1$-rsb estimate for $f_{sq}(\beta)$, $\bar{f}_{sq}^{(1)}(\beta)$, is as given in (\ref{eq:finalfreeenergycomp1sqrt}).
\label{obs:1rsbsqrt}
\end{observation}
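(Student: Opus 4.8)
The plan is to treat Observation \ref{obs:1rsbsqrt} as the consolidation of the three-term decomposition in (\ref{eq:defssaddleIssqrt}), so that the task reduces to justifying each of $I_{Q,P}^{(1)}$, $I_P^{(1)}$, $I_Q^{(1)}$ separately and then reassembling them under the free-energy normalization. First I would start from the saddle-point expression (\ref{eq:ElogZ1sqrt}), substitute the Parisi $1$-rsb ansatz matrices $P^{(1)}$, $Q^{(1)}$ from (\ref{eq:Qsaddle1rsbsqrt}), and split $s_{sq}(\beta,P^{(1)},Q^{(1)})/n_r$ into the cross (trace) term and the two partition-function terms exactly as in (\ref{eq:defssaddleIssqrt}).

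For the cross term $I_{Q,P}^{(1)}$, the key observation is that $P^{(1)}$ and $Q^{(1)}$ are simultaneously diagonalizable, since both are built from the same $I$, $I\otimes U$, $U$ block pattern, so $\tr(Q^{(1)}P^{(1)})$ is simply the multiplicity-weighted sum of the eigenvalue products read off from (\ref{eq:eigQsaddle1rsbsqrt}) and (\ref{eq:eigbarQsaddle1rsbsqrt}). Expanding that sum, dividing by $nn_r$, and letting $n_r\to 0$ kills all contributions quadratic in $n_r$ and retains only the linear part, which yields (\ref{eq:finalsaddleI1sqrt}). This step is routine bookkeeping.

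The heart of the argument is $I_P^{(1)}$, with $I_Q^{(1)}$ following by the verbatim-identical symmetry. Here the block structure of $P^{(1)}$ makes $\frac{\beta^2}{2}\tr(P^{(1)}Y^TY)$ split into three quadratic forms, namely a per-replica diagonal piece, a per-block piece $\sum_i\|\1^TY^{(i)}\|_2^2$, and a global piece $\|\1^TY^T\|_2^2$, with coefficients $\alpha_0,\alpha_1,\alpha_2$ from (\ref{eq:alphabar12sqrt}) and (\ref{eq:alpha0sqrt}). I would linearize each quadratic form by a Hubbard--Stratonovich Gaussian integral, introducing $\z^{(0)},\z^{(1)},\z^{(2)}$ at the three hierarchy levels, interchange the Gaussian integrals with the sum over $S_Y$, and use that the sum factorizes across the $n_r/m_1$ blocks and, within each block, across the $m_1$ independent copies of $\cY$. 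The delicate part is the replica limit: writing $(\cdot)^{n_r/m_1}=e^{\frac{n_r}{m_1}\log(\cdot)}$, applying $e^x\approx 1+x$ and $\log(1+x)\approx x$ as $n_r\to 0$ with $m_1$ held fixed, and pushing the surviving $\frac{1}{m_1}\int$ inside to reach the nested-expectation form (\ref{eq:finalsaddleI3sqrtcomp}), and likewise (\ref{eq:finalsaddleI2sqrtcomp}) for $I_Q^{(1)}$. The main obstacle is carrying out this interchange-of-limits and small-$n_r$ expansion in precisely the order that pairs the Gaussian nesting with the matrix hierarchy; this is exactly the point at which the replica computation is non-rigorous, so the target is internal consistency with the Parisi ansatz rather than a genuine limit theorem.

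Finally I would assemble the three pieces through (\ref{eq:limlogpartfunsqrt}) with the scaling $\beta=\bar\beta\sqrt{n}$ to obtain (\ref{eq:finalfreeenergycompsqrt}), and then perform the single extra cosmetic step to (\ref{eq:finalfreeenergycomp1sqrt}): recognizing that the explicit cross term $-\frac{\beta^2}{2}(1-\bq_1 p_1+m_1(\bq_1 p_1-\bq_0 p_0))$ is itself a scalar instance of a nested Gaussian expectation. Concretely, evaluating $\frac{E_{z_2}\log(E_{z_1}(E_{z_0}e^{\tilde\alpha_0 z_0+\tilde\alpha_1 z_1+\tilde\alpha_2 z_2})^{m_1})}{m_1}$ for standard scalar Gaussians $z_0,z_1,z_2$ and a purely Gaussian integrand (no $\sum_{\x\in\cX}$ or $\sum_{\y\in\cY}$) telescopes level by level to $\frac{1}{2}(\tilde\alpha_0^2+m_1\tilde\alpha_1^2)$, the $\tilde\alpha_2$ contribution vanishing because the outermost integration acts on a term linear in $z_2$; substituting $\tilde\alpha_0,\tilde\alpha_1,\tilde\alpha_2$ from (\ref{eq:tildealpha123sqrt}) then reproduces the cross term exactly. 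I expect this last identification, matching the explicit quadratic to the scalar replica expression, to be the only genuinely new algebraic check, while everything else is a reorganization of the derivation already carried out above.
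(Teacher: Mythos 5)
Your proposal is correct and follows essentially the same route as the paper: the same three-term split into $I_{Q,P}^{(1)}$, $I_P^{(1)}$, $I_Q^{(1)}$, the same eigenvalue bookkeeping for the cross term, the same Hubbard--Stratonovich linearization with block factorization and small-$n_r$ expansion, and the same reassembly under the $\beta=\bar{\beta}\sqrt{n}$ scaling. Your only addition is the explicit telescoping check that $\frac{1}{m_1}E_{z_2}\log\lp E_{z_1}\lp E_{z_0}e^{\tilde{\alpha}_0 z_0+\tilde{\alpha}_1 z_1+\tilde{\alpha}_2 z_2}\rp^{m_1}\rp=\frac{1}{2}\lp\tilde{\alpha}_0^2+m_1\tilde{\alpha}_1^2\rp$ reproduces the cross term, a verification the paper asserts without detail (``one can go a step further''), and your computation of it is correct.
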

Observation \ref{obs:1rsbsqrt} is then sufficient to analyze the average free energy and its  ground state special case on the first level of symmetry breaking. In the following subsection, we look at what happens if one continues to break the structure of $P$ and $Q$ according to the Parisi ansatz.

\subsubsection{A $2$-rsb scheme based on Parisi ansatz}
\label{sec:2rsbsqrtposhop}

Move from the $1$-rsb to $2$-rsb structure of $P$ and $Q$ makes the underlying derivations significantly more cumbersome. To facilitate the presentation and overall clarity, we skip a majority of repetitive details and present only the main new steps and the final results. The overall strategy, of course,  is literally the same as the one presented in Section \ref{sec:1rsbsqrtposhop}.

Along the lines of what was done in Section \ref{sec:1rsbsqrtposhop}, we start by observing that, within the Parisi ansatz, on the second level of the symmetry breaking, $P$ and $Q$ have the following structure
\begin{eqnarray}
P^{(2)}=(1-p_2)I_{n_r}+I_{\frac{n_r}{m_2}}\otimes (p_2-p_1)U_{m_2\times m_2}+
I_{\frac{n_r}{m_1}}\otimes (p_1-p_0)U_{m_1\times m_1}+ p_0 U_{n_r\times n_r}\nonumber \\
Q^{(2)}=(1-\bq_2)I_{n_r}+I_{\frac{n_r}{m_2}}\otimes (\bq_2-\bq_1)U_{m_2\times m_2}+
I_{\frac{n_r}{m_1}}\otimes (\bq_1-\bq_0)U_{m_1\times m_1}+ \bq_0 U_{n_r\times n_r}\nonumber \\
,\label{eq:2rsbQsaddle2rsbsqrt}
\end{eqnarray}
where $I$  and $U$, are as in Section \ref{sec:1rsbsqrtposhop}. One then has
\begin{eqnarray}
n_r-\frac{n_r}{m_2} \quad \mbox{eigenvalues equal to}: & & \lambda_{1}^{(2)}=1-p_2\nonumber \\
\frac{n_r}{m_2}-\frac{n_r}{m_1} \quad \mbox{eigenvalues equal to}: & & \lambda_{2}^{(2)}=(1-p_2)+m_2(p_2-p_1)\nonumber \\
\frac{n_r}{m_1}-1 \quad \mbox{eigenvalues equal to}: & & \lambda_{3}^{(2)}=(1-p_2)+ m_2(p_2-p_1)+ m_1(p_1-p_0)\nonumber \\
1 \quad \mbox{eigenvalue equal to}: & & \lambda_{4}^{(2)}=(1-p_2)+ m_2(p_2-p_1)+ m_1(p_1-p_0)+n_r\beta p_0,\nonumber \\\label{eq:2rsbeigQsaddle2rsbsqrt}
\end{eqnarray}
for the eigenvalues of matrix $P^{(2)}$ and
\begin{eqnarray}
n_r-\frac{n_r}{m_2} \quad \mbox{eigenvalues equal to}: & & \bar{\lambda}_{1}^{(2)}=1-\bq_2\nonumber \\
\frac{n_r}{m_2}-\frac{n_r}{m_1} \quad \mbox{eigenvalues equal to}: & & \bar{\lambda}_{2}^{(2)}=(1-\bq_2)+m_2(\bq_2-\bq_1)\nonumber \\
\frac{n_r}{m_1}-1 \quad \mbox{eigenvalues equal to}: & & \bar{\lambda}_{3}^{(2)}=(1-\bq_2)+ m_2(\bq_2-\bq_1)+ m_1(\bq_1-\bq_0)\nonumber \\
1 \quad \mbox{eigenvalue equal to}: & & \bar{\lambda}_{4}^{(2)}=(1-\bq_2)+ m_2(\bq_2-\bq_1)+ m_1(\bq_1-\bq_0)+n_r\beta \bq_0,\nonumber \\\label{eq:2rsbeigQbarsaddle2rsbsqrt}
\end{eqnarray}
for the eigenvalues of matrix $Q^{(2)}$. Following into the footsteps of Section \ref{sec:1rsbsqrtposhop}, the computation of $s_{sq}(\beta,P^{(2)},Q^{(2)})$ is again split into three steps. First, we recognize
\begin{eqnarray}
\lim_{n_r\rightarrow 0,n\rightarrow \infty} \frac{s_{sq}(\beta,P^{(2)},Q^{(2)})}{n_r}
& = &  \lim_{n_r\rightarrow 0,n\rightarrow \infty}\frac{1}{n_r}
\Bigg( \Bigg.   -\frac{\beta^2}{2n}\tr(Q^{(2)}P^{(2)})
+  \log\left (\sum_{X \in S_X}e^{\frac{\beta^2}{2}\tr(Q^{(2)} X^TX)}\right )^{\frac{1}{n}} \nonumber \\
& & +  \log\left (\sum_{Y \in S_Y}e^{\frac{\beta^2}{2}\tr(P^{(2)} Y^TY)}\right )^{\frac{1}{n}}  \Bigg.\Bigg) \nonumber \\
& = &  I_{Q,P}^{(2)}+I_{Q}^{(2)}+I_{P}^{(2)}, \label{eq:2rsbdefssaddleIs2rsbsqrt}
\end{eqnarray}
where
\begin{eqnarray}
 I_{Q,P}^{(2)} & = & \lim_{n_r\rightarrow 0,n\rightarrow \infty}\frac{-\frac{\beta^2}{2n}\tr(Q^{(2)}P^{(2)})}{n_r}\nonumber \\
I_{Q}^{(2)}& = & \lim_{n_r\rightarrow 0,n\rightarrow \infty}\frac{\log\left (\sum_{X \in S_X}e^{\frac{\beta^2}{2}\tr(Q^{(2)} X^TX)}\right )^{\frac{1}{n}}}{n_r}\nonumber \\
 I_{P}^{(2)}& = & \lim_{n_r\rightarrow 0,n\rightarrow \infty}\frac{\log\left (\sum_{Y \in S_Y}e^{\frac{\beta^2}{2}\tr(P^{(2)} Y^TY)}\right )^{\frac{1}{n}}}{n_r}. \label{eq:2rsbsaddleIs2rsbsqrt}
\end{eqnarray}
Below, we separately compute each of $I_{Q,P}^{(2)}$, $I_{P}^{(2)}$, and $I_{Q}^{(2)}$.

\noindent \emph{\underline{1) Determining $I_{Q,P}^{(2)}$}}
\vspace{.1in}

\noindent We start with
\begin{align}
\hspace{-.0in}I_{Q,P}^{(2)} & =\lim_{n_r\rightarrow 0,n\rightarrow \infty} -\frac{\beta^2}{2nn_r}\Bigg(\Bigg. \left (n_r-\frac{n_r}{m_2}\right )(1-\bq_2)(1-p_2) \nonumber \\
& \quad +\left (\frac{n_r}{m_2}-\frac{n_r}{m_1}\right )\left ( 1-\bq_2+m_2(\bq_2-\bq_1)\right )\left ( 1-p_2+m_2(p_2-p_1)\right )\nonumber \\
& \quad +\left (\frac{n_r}{m_1}-1\right )\left ( 1-\bq_2+m_2(\bq_2-\bq_1)+m_1(\bq_1-\bq_0)\right )\left ( 1-p_2+m_2(p_2-p_1)+m_1(p_1-p_0)\right )\nonumber \\
& \quad +\left ( 1-\bq_2+m_2(\bq_2-\bq_1)+m_1(\bq_1-\bq_0)+n_r\bq_0\right )\left ( 1-p_2+m_2(p_2-p_1)+m_1(p_1-p_0)+n_rp_0\right ) \Bigg. \Bigg)\nonumber \\
& =\lim_{n\rightarrow \infty} -\frac{\beta^2}{2n}(1-\bq_2p_2 +m_2(\bq_2p_2-\bq_1p_1)+m_1(\bq_1p_1-\bq_0p_0) ).\label{eq:2rsbfinalsaddleI12rsbsqrt}
\end{align}

\noindent \emph{\underline{2) Determining $I_{P}^{(2)}$}}
\vspace{.1in}

\noindent  We start with
\begin{equation}
 I_{P}^{(2)}= \lim_{n_r\rightarrow 0,n\rightarrow \infty}\frac{\log\left (\sum_{Y \in S_Y}e^{\frac{\beta^2}{2}\tr(P^{(2)} Y^TY)}\right )^{\frac{1}{n}}}{n_r}, \label{eq:2rsbsaddleI3sqrt}
\end{equation}
and introduce
\begin{eqnarray}
\hspace{-0in}g_{tot}^{(2)}
& = & \frac{\beta^2}{2}\tr(P^{(2)}Y^TY)
 \nonumber \\
& = &
\frac{n_r\beta^2(1-p_2)}{2}
+\frac{\beta^2(p_2-p_1)\sum_{i_2=1}^{\frac{n_r}{m_2}}\|\1^TY^{(i,2)}\|_2^2}
{2}
+\frac{\beta^2(p_1-p_0)\sum_{i_1=1}^{\frac{n_r}{m_1}}\|\1^TY^{(i,1)}\|_2^2}
{2} \nonumber \\
& & +\frac{\beta^2 p_0 \|\1^TY^T\|_2^2}{2},\nonumber \\
\label{eq:2rsbg3totsqrt}
\end{eqnarray}
where
\begin{eqnarray}
Y^{(i,k)}=Y^T_{(i-1)m_k+1:im_k,1:n},1\leq i\leq \frac{n_r}{m_k}, k\in\{1,2\}.\label{eq:2rsbdefYisqrt}
\end{eqnarray}
Moreover, we can also write
\begin{equation}
\hspace{-0in}g_{tot}^{(2)}=\frac{\beta^2}{2}\tr(P^{(2)}Y^TY)=\frac{n_r\beta^2(1-p_2)}{2}
+\frac{\alpha_{1}^2\sum_{i_2=1}^{\frac{n_r}{m_1}}\|\1^TY^{(i,2)}\|_2^2}
{2}
+\frac{\alpha_{2}^2\sum_{i_1=1}^{\frac{n_r}{m_1}}\|\1^TY^{(i,1)}\|_2^2}
{2}
+\frac{\alpha_{3}^2 \|\1^TY^T\|_2^2}{2},\label{eq:2rsbg3tot1sqrt}
\end{equation}
where
\begin{eqnarray}
\alpha_{1} & = & \beta\sqrt{p_2-p_1}\nonumber \\
\alpha_{2} & = & \beta\sqrt{p_1-p_0}\nonumber \\
\alpha_{3} & = & \beta\sqrt{p_0}.\label{eq:2rsbalphabar12sqrt}
\end{eqnarray}
Combining (\ref{eq:2rsbsaddleI3sqrt}) and (\ref{eq:2rsbg3tot1sqrt}), we have
\begin{align}
\hspace{-.0in}I_{P}^{(2)} & = \lim_{n_r\rightarrow 0,n\rightarrow \infty}\frac{\log\left (\sum_{Y \in S_Y}e^{\frac{n\beta^2}{2}\tr(P^{(2)} Y^TY)}\right )^{\frac{1}{n}}}{nn_r}
=\lim_{n_r\rightarrow 0,n\rightarrow \infty}\frac{\log\left (\sum_{Y\in S_Y}e^{g_{tot}^{(1)}}\right )}{nn_r}\nonumber \\
& =  \lim_{n\rightarrow \infty} \frac{\beta^2(1-p_1)}{2n} \nonumber \\
& \quad +\lim_{n_r\rightarrow 0,n\rightarrow \infty}
\frac{\log\left (\sum_{Y\in S_Y}
\prod_{k=1}^{2}e^{\frac{\alpha_{3-k}^2\sum_{i=1}^{\frac{n_r}{m_k}}\|\1^TY^{(i,k)}\|_2^2}{2}}
\int_{\z^{(3)}}e^{\alpha_{3} \sum_{i=1}^{\frac{n_r}{m_1}}\1^TY^{(i,1)}\z^{(3)}}D\z^{(3)}\right )}{nn_r}\nonumber \\
& =  \lim_{n\rightarrow \infty}\frac{\beta^2(1-p_1)}{2n}\nonumber \\
& \quad
+\lim_{n_r\rightarrow 0,n\rightarrow \infty}\frac{\log\left (\int_{\z^{(3)}}\sum_{Y\in S_Y}
\prod_{k=1}^{2}e^{\frac{\alpha_{3-k}^2\sum_{i=1}^{\frac{n_r}{m_k}}\|\1^TY^{(i,k)}\|_2^2}{2}}
e^{\alpha_{3}\sum_{i=1}^{\frac{n_r}{m_1}}\1^TY^{(i,1)}\z^{(3)}}D\z^{(3)}\right )}{nn_r}\nonumber \\
& =  \lim_{n\rightarrow \infty}\frac{\beta^2(1-p_1)}{2n} \nonumber \\
& \quad +\lim_{n_r\rightarrow 0,n\rightarrow \infty}
\frac{\log\left (\int_{\z^{(3)}}\left (\sum_{Y^{(i,1)}\in S_Y^{(m_1)}}
e^{\frac{\alpha_{1}^2\sum_{i=1}^{\frac{m_1}{m_2}}\|\1^TY^{(i,2)}\|_2^2}
{2}}
e^{\frac{\alpha_{2}^2\|\1^TY^{(i,1)}\|_2^2}
{2}}
e^{\alpha_{3}\1^TY^{(i,1)}\z^{(3)}}\right )^{\frac{n_r}{m_1}}D\z^{(3)}\right )}{nn_r}, \nonumber \\\label{eq:2rsbsaddleI3gtotsqrt}
\end{align}
where $S_Y^{(m_1)}=\underbrace{\cY\times \cY \times \dots \times \cY}_{m_1\hspace{.03in} times}$ and $D\z^{(3)}=\frac{e^{-\frac{\|\z^{(3)}\|_2^2}{2}}d\z^{(3)}}{(2\pi)^{\frac{n}{2}}}$ A bit of further algebraic transformations gives
\begin{align}
\hspace{-.0in}I_{P}^{(2)} & =  \lim_{n\rightarrow \infty}\frac{\beta^2(1-p_1)}{2n} \nonumber \\
& \quad +\lim_{n_r\rightarrow 0,n\rightarrow \infty}
\frac{\log\left (\int_{\z^{(3)}}
e^{\log\left (\sum_{Y^{(i)}\in S_Y^{(m_1)}}
e^{\frac{\alpha_{1}^2\sum_{i=1}^{\frac{m_1}{m_2}}\|\1^TY^{(i,2)}\|_2^2}
{2}}
e^{\frac{\alpha_{2}^2\|\1^TY^{(i,1)}\|_2^2}
{2}}e^{\alpha_{3} \1^TY^{(i,1)}\z^{(3)}}\right )^{\frac{n_r}{m_1}}}D\z^{(3)}\right )}{nn_r}\nonumber \\
& = \lim_{n\rightarrow \infty}\frac{\beta^2(1-p_1)}{2n} \nonumber \\
& \quad +\lim_{n_r\rightarrow 0,n\rightarrow \infty}\frac{\log\left (\int_{\z^{(3)}}e^{\frac{n_r}{m_1}\log\left (\sum_{Y^{(i)}\in S_Y^{(m_1)}}
e^{\frac{\alpha_{1}^2\sum_{i=1}^{\frac{m_1}{m_2}}\|\1^TY^{(i,2)}\|_2^2}
{2}}
e^{\frac{\alpha_{2}^2\|\1^TY^{(i,1)}\|_2^2}
{2}}e^{\alpha_{3}\1^TY^{(i,1)}\z^{(3)}}\right )}D\z^{(3)}\right )}{nn_r}.\nonumber \\ \label{eq:2rsbsaddleI3gtot1sqrt}
\end{align}
Since $n_r\rightarrow 0$ we also have
{\small \begin{align}
\hspace{-.0in}I_{P}^{(2)}
\hspace{-.0in} & =  \lim_{n\rightarrow \infty} \frac{\beta^2(1-p_1)}{2n} \nonumber \\
& \quad +\lim_{n_r\rightarrow 0,n\rightarrow \infty}\frac{\log\left (1+\frac{n_r}{m_1}\int_{\z^{(3)}}\log\left (\sum_{Y^{(i)}\in S_Y^{(m_1)}}
e^{\frac{\alpha_{1}^2\sum_{i=1}^{\frac{m_1}{m_2}}\|\1^TY^{(i,2)}\|_2^2}
{2}}
e^{\frac{\alpha_{2}^2\|\1^TY^{(i,1)}\|_2^2}
{2}}e^{\alpha_{3}\1^TY^{(i,1)}\z^{(3)}}\right )D\z^{(3)}\right )}{nn_r} \nonumber  \\
 & = \lim_{n\rightarrow \infty} \frac{\beta^2(1-p_1)}{2n} \nonumber \\
& \quad +\lim_{n\rightarrow \infty}\frac{\frac{1}{m_1}\int_{\z^{(3)}}\log\left (\sum_{Y^{(i)}\in S_Y^{(m_1)}}
 e^{\frac{\alpha_{1}^2\sum_{i=1}^{\frac{m_1}{m_2}}\|\1^TY^{(i,2)}\|_2^2}
{2}}
 e^{\frac{\alpha_{2}^2\|\1^TY^{(i,1)}\|_2^2}
{2}}e^{\alpha_{3}\1^TY^{(i,1)}\z^{(3)}}\right )D\z^{(3)}}{n}. \nonumber \\
\label{eq:2rsbsaddleI3gtot2sqrt}
\end{align}}

\noindent Transforming further, we also have
\begin{align}
  I_{P}^{(2)}
   & = \lim_{n\rightarrow \infty} \frac{\beta^2(1-p_1)}{2n} \nonumber \\
& \quad +\lim_{n\rightarrow \infty}\frac{\int_{\z^{(3)}}\log\left (\sum_{Y^{(i)}\in S_Y^{(m_1)}}
 e^{\frac{\alpha_{1}^2\sum_{i=1}^{\frac{m_1}{m_2}}\|\1^TY^{(i,2)}\|_2^2}
{2}}
 e^{\frac{\alpha_{2}^2\|\1^TY^{(i,1)}\|_2^2}
{2}}e^{\alpha_{3}\1^TY^{(i,1)}\z^{(3)}}\right )D\z^{(3)}}{m_1n} \nonumber \\
\hspace{-.0in} & = \lim_{n\rightarrow \infty}\frac{\beta^2(1-p_1)}{2n} \nonumber \\
 & \quad +\lim_{n\rightarrow \infty}\frac{\int_{\z^{(3)}}\log\left (\sum_{Y^{(i)}\in S_Y^{(m_1)}}
  e^{\frac{\alpha_{1}^2\sum_{i=1}^{\frac{m_1}{m_2}}\|\1^TY^{(i,2)}\|_2^2}
{2}}
 \int_{\z^{(2)}}e^{\alpha_{2}\1^TY^{(i,1)}\z^{(2)}}
D\z^{(2)}
e^{\alpha_{3}\1^TY^{(i,1)}\z^{(3)}}\right )D\z^{(3)}}{m_1n} \nonumber \\
\hspace{-.0in} & = \lim_{n\rightarrow \infty}\frac{\beta^2(1-p_1)}{2n}  \nonumber \\
 & \quad +\lim_{n\rightarrow \infty}\frac{\int_{\z^{(3)}}\log\left (\int_{\z^{(2)}}\sum_{Y^{(i)}\in S_Y^{(m_1)}}
   e^{\frac{\alpha_{1}^2\sum_{i=1}^{\frac{m_1}{m_2}}\|\1^TY^{(i,2)}\|_2^2}
{2}}
 e^{\1^TY^{(i,1)}(\alpha_{2}\z^{(2)}+\alpha_{3}\z^{(3)})}
 D\z^{(2)}\right )D\z^{(3)}}{m_1n} \nonumber \\
\hspace{-.0in} & = \lim_{n\rightarrow \infty}\frac{\beta^2(1-p_1)}{2n}  \nonumber \\
 & \quad +\lim_{n\rightarrow \infty}\frac{\int_{\z^{(3)}}\log\left (\int_{\z^{(2)}}\sum_{Y^{(i)}\in S_Y^{(m_1)}}
   e^{\frac{\alpha_{1}^2\sum_{i=1}^{\frac{m_1}{m_2}}\|\1^TY^{(i,2)}\|_2^2}
{2}}
 e^{\sum_{i=1}^{\frac{m_1}{m_2}}\1^TY^{(i,2)}(\alpha_{2}\z^{(2)}+\alpha_{3}\z^{(3)})}
 D\z^{(2)}\right )D\z^{(3)}}{m_1n} \nonumber \\
\hspace{-.0in} & = \lim_{n\rightarrow \infty}\frac{\beta^2(1-p_1)}{2n}  \nonumber \\
 & \quad +\lim_{n\rightarrow \infty}\frac{\int_{\z^{(3)}}\log\left (\int_{\z^{(2)}}
 \lp \sum_{Y^{(i,2)}\in S_Y^{(m_2)}}
   e^{\frac{\alpha_{1}^2 \|\1^TY^{(i,2)}\|_2^2}
{2}}
 e^{\1^TY^{(i,2)}(\alpha_{2}\z^{(2)}+\alpha_{3}\z^{(3)})}\rp^{{\frac{m_1}{m_2}}}
 D\z^{(2)}\right )D\z^{(3)}}{m_1n} \nonumber \\
\hspace{-.0in} & = \lim_{n\rightarrow \infty}\frac{\beta^2(1-p_1)}{2n}  \nonumber \\
 & \quad +\lim_{n\rightarrow \infty}\frac{\int_{\z^{(3)}}\log\left (\int_{\z^{(2)}}
 \lp \sum_{Y^{(i,2)}\in S_Y^{(m_2)}}
\int_{\z^{(1)}}   e^{\alpha_{1} \1^TY^{(i,2)}\z^{(1)}}D\z^{(1)}
 e^{\1^TY^{(i,2)}(\alpha_{2}\z^{(2)}+\alpha_{3}\z^{(3)})}\rp^{{\frac{m_1}{m_2}}}
 D\z^{(2)}\right )D\z^{(3)}}{m_1n} \nonumber \\
\hspace{-.0in} & = \lim_{n\rightarrow \infty}\frac{\beta^2(1-p_1)}{2n}  \nonumber \\
 & \quad +\lim_{n\rightarrow \infty}\frac{\int_{\z^{(3)}}\log\left (\int_{\z^{(2)}}
 \lp \int_{\z^{(1)}}  \sum_{Y^{(i,2)}\in S_Y^{(m_2)}}
  e^{\alpha_{1} \1^TY^{(i,2)}\z^{(1)}}
 e^{\1^TY^{(i,2)}(\alpha_{2}\z^{(2)}+\alpha_{3}\z^{(3)})} D\z^{(1)} \rp^{{\frac{m_1}{m_2}}}
 D\z^{(2)}\right )D\z^{(3)}}{m_1n} \nonumber \\
\hspace{-.0in} & = \lim_{n\rightarrow \infty}\frac{\beta^2(1-p_1)}{2n}  \nonumber \\
 & \quad +\lim_{n\rightarrow \infty}\frac{\int_{\z^{(3)}}\log\left (\int_{\z^{(2)}}
 \lp \int_{\z^{(1)}}
 \lp \sum_{\y\in \cY}
 e^{\y^T(\alpha_{1}\z^{(1)}+\alpha_{2}\z^{(2)}+\alpha_{3}\z^{(3)})} \rp^{m_2}
 D\z^{(1)} \rp^{{\frac{m_1}{m_2}}}
 D\z^{(2)}\right )D\z^{(3)}}{m_1n}, \nonumber \\
  \label{eq:2rsbsaddleI3gtot3sqrt}
\end{align}
where $S_Y^{(m_2)}=\underbrace{\cY\times \cY \times \dots \times \cY}_{m_2\hspace{.03in} times}$, $D\z^{(1)}=\frac{e^{-\frac{\|\z^{(1)}\|_2^2}{2}}d\z^{(1)}}{(2\pi)^{\frac{n}{2}}}$, and $D\z^{(2)}=\frac{e^{-\frac{\|\z^{(2)}\|_2^2}{2}}d\z^{(2)}}{(2\pi)^{\frac{n}{2}}}$. It is not that difficult to see that $I_{P}^{(2)}$ and (\ref{eq:2rsbsaddleI3gtot3sqrt}) can be rewritten more compactly as
\begin{equation}
  I_{P}^{(2)}= \lim_{n\rightarrow \infty}\frac{E_{\z^{(3)}}\log\left (E_{\z^{(2)}}\left ( E_{\z^{(1)}}\left ( E_{\z^{(0)}}\left (\sum_{\y\in\cY}e^{\y^T ( \alpha_{0}\z^{(0)}+\alpha_{1}\z^{(1)}+\alpha_{2}\z^{(2)}+\alpha_{3} \z^{(3)})}
\right )\right )^{m_2}\right )^{\frac{m_1}{m_2}} \rp}{m_1n},\label{eq:2rsbfinalsaddleI3sqrtcomp}
\end{equation}
where
\begin{eqnarray}
\alpha_{0} & = & \beta\sqrt{1-p_2} \nonumber \\
\alpha_{1} & = & \beta\sqrt{p_2-p_1}\nonumber \\
\alpha_{2} & = & \beta\sqrt{p_1-p_0}\nonumber \\
\alpha_{3} & = & \beta\sqrt{p_0}.\label{eq:2rsbalpha1234sqrt}
\end{eqnarray}

\noindent \emph{\underline{3) Determining $I_{Q}^{(2)}$}}
\vspace{.1in}

\noindent As $I_{Q}^{(2)}$ and $I_{P}^{(2)}$ are structurally identical, one can repeat all the above steps and obtain
completely analogously to (\ref{eq:2rsbfinalsaddleI2sqrtcomp})
\begin{equation}
  I_{Q}^{(2)}= \lim_{n\rightarrow \infty}\frac{E_{\z^{(3)}}\log\left (E_{\z^{(2)}}\left ( E_{\z^{(1)}}\left ( E_{\z^{(0)}}\left (\sum_{\x\in\cX}e^{\c^T ( \bar{\alpha}_{0}\z^{(0)}+\bar{\alpha}_{1}\z^{(1)}+\bar{\alpha}_{2}\z^{(2)}+\alpha_{3} \z^{(3)})}
\right )\right )^{m_2}\right )^{\frac{m_1}{m_2}} \rp}{m_1n},\label{eq:2rsbfinalsaddleI2sqrtcomp}
\end{equation}
\begin{eqnarray}
\bar{\alpha}_{0} & = & \beta\sqrt{1-\bq_2} \nonumber \\
\bar{\alpha}_{1} & = & \beta\sqrt{\bq_2-\bq_1}\nonumber \\
\bar{\alpha}_{2} & = & \beta\sqrt{\bq_1-\bq_0}\nonumber \\
\bar{\alpha}_{3} & = & \beta\sqrt{\bq_0}.\label{eq:2rsbalphabar1234sqrt}
\end{eqnarray}

A combination of (\ref{eq:ElogZ1sqrt}), (\ref{eq:2rsbdefssaddleIs2rsbsqrt}), (\ref{eq:2rsbfinalsaddleI12rsbsqrt}), (\ref{eq:2rsbfinalsaddleI3sqrtcomp}), and (\ref{eq:2rsbfinalsaddleI2sqrtcomp}), gives $\bar{f}_{sq}^{(2)}(\beta)$ as a $2$-rsb estimate of $f_{sq}(\beta)$,
\begin{eqnarray}
\bar{f}_{sq}^{(2)}(\beta) &  = & -\lim_{n\rightarrow \infty} \frac{\beta^2}{2\frac{\beta}{\sqrt{n}}n}
(1-\bq_2p_2 +m_2(\bq_2p_2-\bq_1p_1)+m_1(\bq_1p_1-\bq_0p_0) ) \nonumber \\
& & +\lim_{n\rightarrow \infty}\frac{E_{\z^{(3)}}\log\left (E_{\z^{(2)}}\left ( E_{\z^{(1)}}\left ( E_{\z^{(0)}}\left (\sum_{\y\in\cY}e^{\y^T ( \alpha_{0}\z^{(0)}+\alpha_{1}\z^{(1)}+\alpha_{2}\z^{(2)}+\alpha_{3} \z^{(3)})}
\right )\right )^{m_2}\right )^{\frac{m_1}{m_2}} \rp}{\frac{\beta}{\sqrt{n}}m_1n} \nonumber \\
& & +\lim_{n\rightarrow \infty}\frac{E_{\z^{(3)}}\log\left (E_{\z^{(2)}}\left ( E_{\z^{(1)}}\left ( E_{\z^{(0)}}\left (\sum_{\x\in\cX}e^{\c^T ( \bar{\alpha}_{0}\z^{(0)}+\bar{\alpha}_{1}\z^{(1)}+\bar{\alpha}_{2}\z^{(2)}+\alpha_{3} \z^{(3)})}
\right )\right )^{m_2}\right )^{\frac{m_1}{m_2}} \rp}{\frac{\beta}{\sqrt{n}}  m_1n}, \nonumber \\\label{eq:finalfreeenergycompsqrt}
\end{eqnarray}
or in a fully compact form
\begin{eqnarray}
\bar{f}_{sq}^{(2)}(\beta) & = &
-\lim_{n\rightarrow \infty}
\frac{E_{z_3}\log \left ( E_{z_2}\left ( E_{z_1}\left (E_{z_0}e^{(\tilde{\alpha}_{0}  z_0+\tilde{\alpha}_{1}  z_1+\tilde{\alpha}_{2}  z_2+\tilde{\alpha}_{3}  z_3)}\right )^{m_2}\right )^{\frac{m_1}{m_2}}\rp }{\frac{\beta}{\sqrt{n}}m_1n}
\\
& & +\lim_{n\rightarrow \infty}\frac{E_{\z^{(3)}}\log\left (E_{\z^{(2)}}\left ( E_{\z^{(1)}}\left ( E_{\z^{(0)}}\left (\sum_{\y\in\cY}e^{\y^T ( \alpha_{0}\z^{(0)}+\alpha_{1}\z^{(1)}+\alpha_{2}\z^{(2)}+\alpha_{3} \z^{(3)})}
\right )\right )^{m_2}\right )^{\frac{m_1}{m_2}} \rp}{\frac{\beta}{\sqrt{n}}m_1n} \nonumber \\
& & +\lim_{n\rightarrow \infty}\frac{E_{\z^{(3)}}\log\left (E_{\z^{(2)}}\left ( E_{\z^{(1)}}\left ( E_{\z^{(0)}}\left (\sum_{\x\in\cX}e^{\c^T ( \bar{\alpha}_{0}\z^{(0)}+\bar{\alpha}_{1}\z^{(1)}+\bar{\alpha}_{2}\z^{(2)}+\alpha_{3} \z^{(3)})}
\right )\right )^{m_2}\right )^{\frac{m_1}{m_2}} \rp}{\frac{\beta}{\sqrt{n}}  m_1n}, \nonumber \\
\label{eq:2rsbfinalfreeenergycomp1sqrt}
\end{eqnarray}
where
\begin{eqnarray}
\tilde{\alpha}_{0} & = & \beta\sqrt{1-\bq_2p_2}\nonumber \\
\tilde{\alpha}_{2} & = & \beta\sqrt{\bq_2p_2-\bq_1p_1}\nonumber \\
\tilde{\alpha}_{2} & = & \beta\sqrt{\bq_1p_1-\bq_0p_0}\nonumber \\
\tilde{\alpha}_{3} & = & \beta\sqrt{\bq_0p_0}.\label{eq:tildealpha1234sqrt}
\end{eqnarray}

The following then summarizes what we presented in this subsection.
\begin{observation} (\textbf{2-rsb})
Let the thermodynamic limit of the averaged free energy of generic bilinear Hamiltonian based model, $f_{sq}(\beta)$, be as defined in (\ref{eq:logpartfunsqrt}). Let matrices $P^{(2)}$
and $Q^{(2)}$ be as defined in (\ref{eq:2rsbQsaddle2rsbsqrt}). Further let $\alpha$, $\bar{\alpha}$, and $\tilde{\alpha}$ sequences be as defined in
(\ref{eq:2rsbalpha1234sqrt}), (\ref{eq:2rsbalphabar1234sqrt}), and (\ref{eq:tildealpha1234sqrt})
Then, for appropriately selected parameters in the definition of $P^{(2)}$ and $Q^{(2)}$, one has that the  $2$-rsb estimate for $f_{sq}(\beta)$, $\bar{f}_{sq}^{(2)}(\beta)$, is as given in (\ref{eq:2rsbfinalfreeenergycomp1sqrt}).
\label{obs:2rsbsqrt}
\end{observation}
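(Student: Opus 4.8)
The plan is to establish the stated formula by directly evaluating the saddle-point action $s_{sq}(\beta,P^{(2)},Q^{(2)})$ of (\ref{eq:defssaddle1sqrt}) on the $2$-rsb ansatz (\ref{eq:2rsbQsaddle2rsbsqrt}), following step-for-step the $1$-rsb derivation of Section \ref{sec:1rsbsqrtposhop} with one extra hierarchical layer. First I would split the normalized action into the three pieces $I_{Q,P}^{(2)}$, $I_{P}^{(2)}$, $I_{Q}^{(2)}$ as in (\ref{eq:2rsbdefssaddleIs2rsbsqrt}) and handle each separately. The coupling term $I_{Q,P}^{(2)}$ is purely algebraic: because $P^{(2)}$ and $Q^{(2)}$ are assembled from the same nested Kronecker/all-ones block hierarchy, they share a common eigenbasis and are simultaneously diagonalizable, so $\tr(Q^{(2)}P^{(2)})$ is just the multiplicity-weighted sum of paired eigenvalue products read off from (\ref{eq:2rsbeigQsaddle2rsbsqrt}) and (\ref{eq:2rsbeigQbarsaddle2rsbsqrt}). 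Expanding this sum and letting $n_r\rightarrow 0$ collapses the $O(n_r^2)$ terms and leaves only the products $\bq_k p_k$, yielding (\ref{eq:2rsbfinalsaddleI12rsbsqrt}).

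The substantive work is the evaluation of the entropic term $I_{P}^{(2)}$ (and, by symmetry, $I_{Q}^{(2)}$). Here I would expand $\frac{\beta^2}{2}\tr(P^{(2)}Y^TY)$ using the block decomposition of $Y$ into the level-$1$ and level-$2$ groupings $Y^{(i,1)},Y^{(i,2)}$ of (\ref{eq:2rsbdefYisqrt}), producing the squared-norm contributions of (\ref{eq:2rsbg3totsqrt}). Each squared norm is then linearized by an independent Hubbard--Stratonovich Gaussian integral, introducing the auxiliary fields $\z^{(3)},\z^{(2)},\z^{(1)}$ from the coarsest block ($p_0$) down to the finest ($p_2-p_1$). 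The key structural fact is that, once linearized, the sum over $Y\in S_Y$ factorizes across the hierarchy, so that the sum over a single $m_2$-block becomes $(\sum_{\y\in\cY}e^{\y^T(\cdots)})^{m_2}$, the next-level sum its $\frac{m_1}{m_2}$ power, and the whole object takes the nested form of (\ref{eq:2rsbsaddleI3gtot3sqrt}). Peeling off the replica limit via $\log(1+\frac{n_r}{m_1}x)\sim\frac{n_r}{m_1}x$ then produces the compact iterated-expectation expression (\ref{eq:2rsbfinalsaddleI3sqrtcomp}) with $\alpha$ as in (\ref{eq:2rsbalpha1234sqrt}); the identical argument with $\cX,\bq,\bar\alpha$ replacing $\cY,p,\alpha$ delivers $I_{Q}^{(2)}$ in the form (\ref{eq:2rsbfinalsaddleI2sqrtcomp}).

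Finally I would assemble the three pieces through (\ref{eq:ElogZ1sqrt}) and rescale to the free energy with $\beta=\bar\beta\sqrt{n}$, obtaining the combined formula (\ref{eq:finalfreeenergycompsqrt}). To reach the fully compact statement (\ref{eq:2rsbfinalfreeenergycomp1sqrt}) it remains only to recognize that the algebraic term $I_{Q,P}^{(2)}$ is itself a scalar nested Gaussian expectation: evaluating $E_{z_3}\log(E_{z_2}(E_{z_1}(E_{z_0}e^{\tilde\alpha_0 z_0+\cdots+\tilde\alpha_3 z_3})^{m_2})^{\frac{m_1}{m_2}})$ by successive one-dimensional Gaussian integrations reproduces $\tfrac12(\tilde\alpha_0^2+m_2\tilde\alpha_1^2+m_1\tilde\alpha_2^2)$ up to the $m_1 n$ normalization, the $\tilde\alpha_3$ contribution dropping out because it enters linearly at the outermost level where $E_{z_3}$ annihilates it. With $\tilde\alpha$ chosen as in (\ref{eq:tildealpha1234sqrt}) this matches (\ref{eq:2rsbfinalsaddleI12rsbsqrt}) exactly, closing the identification.

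I expect the main obstacle to be the bookkeeping of the iterated Hubbard--Stratonovich decoupling together with the block factorization: one must verify that the exponents $m_2$ and $\frac{m_1}{m_2}$ emerge in the correct order, that the fields $\z^{(1)},\z^{(2)},\z^{(3)}$ attach to the right hierarchy levels, and that the double limit $n_r\rightarrow 0$, $n\rightarrow\infty$ is carried consistently through every nested layer. Nothing here is conceptually new relative to the $1$-rsb case; the entire difficulty lies in managing the additional layer of nesting without arithmetic slips.
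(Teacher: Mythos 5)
Your proposal is correct and follows essentially the same route as the paper: the same split into $I_{Q,P}^{(2)}$, $I_{P}^{(2)}$, $I_{Q}^{(2)}$, the eigenvalue computation of the trace term, the hierarchical Hubbard--Stratonovich linearization with $\z^{(3)},\z^{(2)},\z^{(1)}$ attached from coarsest to finest level, the $\log(1+\frac{n_r}{m_1}x)\sim\frac{n_r}{m_1}x$ replica-limit step, and the final assembly and rescaling with $\beta=\bar{\beta}\sqrt{n}$. Your explicit check that the scalar nested Gaussian expectation with the $\tilde{\alpha}$ coefficients reproduces the trace term (with the $\tilde{\alpha}_3$ piece killed by $E_{z_3}$ acting on a linear term) is a correct verification of the passage from (\ref{eq:finalfreeenergycompsqrt}) to (\ref{eq:2rsbfinalfreeenergycomp1sqrt}), which the paper asserts without detail.
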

Observation \ref{obs:2rsbsqrt} is sufficient to analyze the average free energy and its  ground state on the second level of symmetry breaking. The first two levels contain all the necessary ingredients to formalize general mechanism for $r$ levels of symmetry breaking where $r\in\mN$. In the following subsection, we present the final results of such a formalization.

\subsubsection{An $r$-rsb scheme based on Parisi ansatz}
\label{sec:rrsbsqrtposhop}

Setting $p_{-1}=\bq_{-1}=0$ and $m_0=n_r$, within the Parisi ansatz, $P$ and $Q$ have the following structure on the $r$-th level of the symmetry breaking
\begin{eqnarray}
P^{(r)}=(1-p_r)I_{n_r}
+\sum_{k=0}^rI_{\frac{n_r}{m_k}}\otimes (p_k-p_{k-1})U_{m_{k}\times m_{k}} \nonumber \\
Q^{(r)}=(1-\bq_r)I_{n_r}
+\sum_{k=0}^rI_{\frac{n_r}{m_k}}\otimes (\bq_k-\bq_{k-1})U_{m_{k}\times m_{k}},
\label{eq:rrsbQsaddle2rsbsqrt}
\end{eqnarray}
where $I$  and $U$, are as in Section \ref{sec:1rsbsqrtposhop}. One then has
\begin{eqnarray}
n_r-\frac{n_r}{m_r} \quad \mbox{eigenvalues equal to}: & & \lambda_{1}^{(r)}=1-p_r\nonumber \\
\frac{n_r}{m_r}-\frac{n_r}{m_{r-1}} \quad \mbox{eigenvalues equal to}: & & \lambda_{2}^{(r)}=(1-p_r)+m_r(p_r-p_{r-1})\nonumber \\
\vdots  & & \vdots \nonumber \\
\frac{n_r}{m_2}-\frac{n_r}{m_1} \quad \mbox{eigenvalues equal to}: & & \lambda_{r}^{(r)}=(1-p_r)+ \sum_{k=2}^{r}m_k(p_k-p_{k-1})\nonumber \\
\frac{n_r}{m_1}-1 \quad \mbox{eigenvalues equal to}: & & \lambda_{r+1}^{(r)}=(1-p_r)+ \sum_{k=1}^{r}m_k(p_k-p_{k-1})\nonumber \\
1 \quad \mbox{eigenvalue equal to}: & & \lambda_{r+2}^{(r)}=(1-p_r)+ \sum_{k=1}^{r}m_k(p_k-p_{k-1})+n_r\beta p_0,\nonumber \\\label{eq:rrsbeigPsaddle2rsbsqrt}
\end{eqnarray}
for the eigenvalues of matrix $P^{(r)}$ and
\begin{eqnarray}
n_r-\frac{n_r}{m_r} \quad \mbox{eigenvalues equal to}: & & \lambda_{1}^{(r)}=1-\bq_r\nonumber \\
\frac{n_r}{m_r}-\frac{n_r}{m_{r-1}} \quad \mbox{eigenvalues equal to}: & & \lambda_{2}^{(r)}=(1-\bq_r)+m_r(\bq_r-\bq_{r-1})\nonumber \\
\vdots  & & \vdots \nonumber \\
\frac{n_r}{m_2}-\frac{n_r}{m_1} \quad \mbox{eigenvalues equal to}: & & \lambda_{r}^{(r)}=(1-\bq_r)+ \sum_{k=2}^{r}m_k(\bq_k-\bq_{k-1})\nonumber \\
\frac{n_r}{m_1}-1 \quad \mbox{eigenvalues equal to}: & & \lambda_{r+1}^{(r)}=(1-\bq_r)+ \sum_{k=1}^{r}m_k(\bq_k-\bq_{k-1})\nonumber \\
1 \quad \mbox{eigenvalue equal to}: & & \lambda_{r+2}^{(r)}=(1-\bq_r)+ \sum_{k=1}^{r}m_k(\bq_k-\bq_{k-1})+n_r\beta \bq_0,\nonumber \\\label{eq:rrsbeigQsaddle2rsbsqrt}
\end{eqnarray}
for the eigenvalues of matrix $Q^{(r)}$. Analogously to what was done in Section \ref{sec:1rsbsqrtposhop}, the computation of $s_{sq}(\beta,P^{(r)},Q^{(r)})$ is again split into three steps, after one recognizes
\begin{eqnarray}
\lim_{n_r\rightarrow 0,n\rightarrow \infty}  \frac{s_{sq}(\beta,P^{(r)},Q^{(r)})}{n_r}
& = &  \lim_{n_r\rightarrow 0,n\rightarrow \infty}\frac{1}{n_r}
\Bigg( \Bigg.   -\frac{\beta^2}{2n}\tr(Q^{(r)}P^{(r)})
+  \log\left (\sum_{X \in S_X}e^{\frac{\beta^2}{2}\tr(Q^{(r)} X^TX)}\right )^{\frac{1}{n}} \nonumber \\
& & +  \log\left (\sum_{Y \in S_Y}e^{\frac{\beta^2}{2}\tr(P^{(r)} Y^TY)}\right )^{\frac{1}{n}}  \Bigg.\Bigg) \nonumber \\
& = &  I_{Q,P}^{(r)}+I_{Q}^{(r)}+I_{P}^{(r)}, \label{eq:rrsbdefssaddleIs2rsbsqrt}
\end{eqnarray}
where
\begin{eqnarray}
 I_{Q,P}^{(r)} & = & \lim_{n_r\rightarrow 0,n\rightarrow \infty}\frac{-\frac{\beta^2}{2n}\tr(Q^{(r)}P^{(r)})}{n_r}\nonumber \\
I_{Q}^{(r)}& = & \lim_{n_r\rightarrow 0,n\rightarrow \infty}\frac{\log\left (\sum_{X \in S_X}e^{\frac{\beta^2}{2}\tr(Q^{(r)} X^TX)}\right )^{\frac{1}{n}}}{n_r}\nonumber \\
 I_{P}^{(r)}& = & \lim_{n_r\rightarrow 0,n\rightarrow \infty}\frac{\log\left (\sum_{Y \in S_Y}e^{\frac{\beta^2}{2}\tr(P^{(r)} Y^TY)}\right )^{\frac{1}{n}}}{n_r}. \label{eq:rrsbsaddleIs2rsbsqrt}
\end{eqnarray}

\noindent \emph{\underline{1) Determining $I_{Q,P}^{(r)}$}}
\vspace{.1in}

\noindent Following what was done in Sections \ref{sec:1rsbsqrtposhop} and \ref{sec:2rsbsqrtposhop}, we now obtain
\begin{align}
\hspace{-.0in}I_{Q,P}^{(r)}
& = \lim_{n\rightarrow \infty}-\frac{\beta^2}{2n}\lp (1-\bq_rp_r)+ \sum_{k=1}^{r}m_k(\bq_kp_k-\bq_{k-1}p_{k-1}) \rp.\label{eq:rrsbfinalsaddleI12rsbsqrt}
\end{align}

\noindent \emph{\underline{2) Determining $I_{P}^{(r)}$}}
\vspace{.1in}

\noindent  One starts with
\begin{equation}
 I_{P}^{(r)}= \lim_{n_r\rightarrow 0,n\rightarrow \infty}\frac{\log\left (\sum_{Y \in S_Y}e^{\frac{\beta^2}{2}\tr(P^{(r)} Y^TY)}\right )^{\frac{1}{n}}}{n_r}, \label{eq:2rsbsaddleI3sqrt}
\end{equation}
and introduces
\begin{equation}
\hspace{-0in}g_{tot}^{(r,P)}=\frac{\beta^2}{2}\tr(P^{(r)}Y^TY)=\frac{n n_r\beta^2(1-p_r)}{2}
+\sum_{k=1}^{r}\frac{\alpha_{r+1-k}^2\sum_{i_2=1}^{\frac{n_r}{m_k}}\|\1^TY^{(i,k)}\|_2^2}
{2}
 +\frac{\alpha_{3}^2 \|\1^TY^T\|_2^2}{2},\label{eq:2rsbg3tot1sqrt}
\end{equation}
  where
\begin{eqnarray}
Y^{(i,k)}=Y^T_{(i-1)m_k+1:im_k,1:n},1\leq i\leq \frac{n_r}{m_k}, k\in\{1,2,\dots,r\}.\label{eq:2rsbdefYisqrt}
\end{eqnarray}
and
\begin{eqnarray}
\alpha_{r+1-k} & = & \beta\sqrt{p_k-p_{k-1}}, k\in\{1,2,\dots,r\}.\label{eq:2rsbalphabar12sqrt}
\end{eqnarray}
Repeating line by line the strategy presented between (\ref{eq:2rsbsaddleI3gtotsqrt})-(\ref{eq:2rsbalpha1234sqrt}), one arrives at the following $r$-rsb analogue to (\ref{eq:2rsbfinalsaddleI3sqrtcomp})
\begin{equation}
  I_{P}^{(r)}= \lim_{n\rightarrow \infty}\frac{E_{\z_{P}^{(r+1)}}\log \lp \zeta_{r,P}\rp}{m_1n},\label{eq:rrsbfinalsaddleI3sqrtcomp}
\end{equation}
where
\begin{eqnarray}
\label{eq:rrsbp}
\zeta_{k,P} & = & \mE_{\z_P^{(k)}}\zeta_{k-1,P}^{\frac{m_{r+1-k}}{m_{r+2-k}}},k\in\{1,2,\dots,r\} \nonumber \\
\zeta_{-1,P} & = & Z_{\y,P} \nonumber \\
Z_{\y,P} & = & \sum_{\y\in\cY}e^{\y^T \sum_{k=0}^{r+1} \alpha_{k}\z_{P}^{(k)}}\nonumber \\
\alpha_{r+1-k} & = & \beta\sqrt{p_k-p_{k-1}}, k\in\{0,1,2,\dots,r+1\},
 \end{eqnarray}
and $\z_{P}^{(k)}$ are $m$-dimensional vectors of iid standard normals, and $m_{r+1}=m_{r+2}=1$, $p_{r+1}=1$, and $p_{-1}=0$.

\noindent \emph{\underline{3) Determining $I_{Q}^{(r)}$}}
\vspace{.1in}

\noindent As $I_{Q}^{(r)}$ and $I_{P}^{(r)}$ are structurally identical, we immediately have
\begin{equation}
  I_{Q}^{(r)}= \lim_{n\rightarrow \infty}\frac{E_{\z_{Q}^{(r+1)}}\log \lp \zeta_{r,Q}\rp}{m_1n},\label{eq:rrsbfinalsaddleI2sqrtcomp}
\end{equation}
where
\begin{eqnarray}
\label{eq:rrsbq}
\zeta_{k,Q} & = & \mE_{\z_Q^{(k)}}\zeta_{k-1,Q}^{\frac{m_{r+1-k}}{m_{r+2-k}}},k\in\{1,2,\dots,r\} \nonumber \\
\zeta_{-1,Q} & = & Z_{\x,Q} \nonumber \\
Z_{\x,Q} & = & \sum_{\x\in\cX}e^{\x^T \sum_{k=0}^{r+1} \alpha_{k}\z_{Q}^{(k)}}\nonumber \\
\bar{\alpha}_{r+1-k} & = & \beta\sqrt{\bq_k-\bq_{k-1}}, k\in\{0,1,2,\dots,r+1\},
 \end{eqnarray}
and $\z_{Q}^{(k)}$ are $n$-dimensional vectors of iid standard normals, and  $\bq_{r+1}=1$, and $\bq_{-1}=0$.

A combination of (\ref{eq:ElogZ1sqrt}), (\ref{eq:rrsbdefssaddleIs2rsbsqrt}), (\ref{eq:rrsbfinalsaddleI12rsbsqrt}), (\ref{eq:rrsbfinalsaddleI3sqrtcomp}), and (\ref{eq:rrsbfinalsaddleI2sqrtcomp}), gives $\bar{f}_{sq}^{(r)}(\beta)$ as an $r$-rsb estimate of $f_{sq}(\beta)$,
\begin{eqnarray}
\bar{f}_{sq}^{(r)}(\beta) &  = & -\lim_{n\rightarrow \infty} \frac{\beta^2}{2\frac{\beta}{\sqrt{n}}n}
\lp (1-\bq_rp_r)+ \sum_{k=1}^{r}m_k(\bq_kp_k-\bq_{k-1}p_{k-1}) \rp \nonumber \\
& & +\lim_{n\rightarrow \infty}\frac{E_{\z_P^{(r+1)}}\log\left (E_{\z_P^{(r)}}\dots\left ( E_{\z_P^{(1)}}\left ( E_{\z_P^{(0)}}\left (\sum_{\y\in\cY}e^{\y^T \sum_{k=0}^{r+1} \alpha_{k}\z_P^{(k)} }
\right )\right )^{m_r}\dots \right )^{\frac{m_1}{m_2}} \rp}{\frac{\beta}{\sqrt{n}}m_1n} \nonumber \\
& & +\lim_{n\rightarrow \infty}\frac{E_{\z_Q^{(r+1)}}\log\left (E_{\z_Q^{(r)}}\dots\left ( E_{\z_Q^{(1)}}\left ( E_{\z_Q^{(0)}}\left (\sum_{\x\in\cX}e^{\x^T \sum_{k=0}^{r+1} \alpha_{k}\z_Q^{(k)} }
\right )\right )^{m_r}\dots \right )^{\frac{m_1}{m_2}} \rp}{\frac{\beta}{\sqrt{n}}m_1n} \nonumber \\
, \nonumber \\ \label{eq:rrsbfinalfreeenergycompsqrt}
\end{eqnarray}
or in a fully compact form
\begin{eqnarray}
\bar{f}_{sq}^{(r)}(\beta) & = &
-\lim_{n\rightarrow \infty}\frac{E_{z_{P,Q}^{(r+1)}}\log \lp \zeta_{r,P,Q}\rp}{\frac{\beta}{\sqrt{n}}  m_1n}
 +\lim_{n\rightarrow \infty}\frac{E_{\z_{P}^{(r+1)}}\log \lp \zeta_{r,P}\rp}{\frac{\beta}{\sqrt{n}}  m_1n}
  +\lim_{n\rightarrow \infty}\frac{E_{\z_{Q}^{(r+1)}}\log \lp \zeta_{r,Q}\rp}{\frac{\beta}{\sqrt{n}}  m_1n}, \nonumber \\
\label{eq:rrsbfinalfreeenergycomp1sqrt}
\end{eqnarray}
where
\begin{eqnarray}
\label{eq:rrsbpq3}
\zeta_{k,P,Q} & = & \mE_{\z_Q^{(k)}}\zeta_{k-1,P,Q}^{\frac{m_{r+1-k}}{m_{r+2-k}}},k\in\{1,2,\dots,r\} \nonumber \\
\zeta_{-1,P,Q} & = & Z_{P,Q} \nonumber \\
Z_{P,Q} & = & e^{\sum_{k=0}^{r+1} \tilde{\alpha}_{k}z_{P,Q}^{(k)}}\nonumber \\
\tilde{\alpha}_{r+1-k} & = & \beta\sqrt{\bq_kp_k-\bq_{k-1}p_{k-1}}, k\in\{0,1,2,\dots,r+1\},
 \end{eqnarray}
and $z_{P,Q}^{(k)}$ are iid standard normals.

The following summarizes all of the above.
\begin{observation} (\textbf{r-rsb})
Let the thermodynamic limit of the averaged free energy of generic bilinear Hamiltonian based models, $f_{sq}(\beta)$, be as defined in (\ref{eq:logpartfunsqrt}). For any $r\in\mN$, let matrices $P^{(r)}$
and $Q^{(r)}$ be as defined in (\ref{eq:rrsbQsaddle2rsbsqrt}). Further let $\alpha$, $\bar{\alpha}$, and $\tilde{\alpha}$ sequences be as defined in
(\ref{eq:rrsbp}), (\ref{eq:rrsbq}), and (\ref{eq:rrsbpq3})
Then, for appropriately selected parameters in the definition of $P^{(r)}$ and $Q^{(r)}$, one has that the  $r$-rsb estimate for $f_{sq}(\beta)$, $\bar{f}_{sq}^{(r)}(\beta)$, is as given in (\ref{eq:rrsbfinalfreeenergycomp1sqrt}).
\label{obs:rrsbsqrt}
\end{observation}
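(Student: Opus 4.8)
The plan is to establish the formula (\ref{eq:rrsbfinalfreeenergycomp1sqrt}) by induction on the level $r$, using the explicitly worked $1$-rsb and $2$-rsb computations (Observations \ref{obs:1rsbsqrt} and \ref{obs:2rsbsqrt}) as base cases and exhibiting the pattern that propagates to arbitrary $r$. Throughout I would work inside the replica representation (\ref{eq:ElogZ1sqrt}) with the saddle functional (\ref{eq:defssaddle1sqrt}), evaluated at the Parisi-ansatz matrices $P^{(r)},Q^{(r)}$ from (\ref{eq:rrsbQsaddle2rsbsqrt}), and split $s_{sq}(\beta,P^{(r)},Q^{(r)})/n_r$ into the three pieces $I_{Q,P}^{(r)}, I_Q^{(r)}, I_P^{(r)}$ exactly as in (\ref{eq:rrsbdefssaddleIs2rsbsqrt}). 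Since the two coupling-free pieces $I_P^{(r)}$ and $I_Q^{(r)}$ are structurally identical, essentially all of the work lives in one of them.

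For the cross term $I_{Q,P}^{(r)}$ I would first record the eigen-decomposition of the nested-block matrices. Because $U_{m\times m}$ has eigenvalue $m$ with multiplicity one and $0$ with multiplicity $m-1$, the Kronecker/direct-sum structure of (\ref{eq:rrsbQsaddle2rsbsqrt}) lets one peel off the levels one at a time, producing the telescoping list of eigenvalues and multiplicities in (\ref{eq:rrsbeigPsaddle2rsbsqrt}) and (\ref{eq:rrsbeigQsaddle2rsbsqrt}). Since $P^{(r)}$ and $Q^{(r)}$ are built from the same hierarchy of all-ones blocks they are simultaneously diagonalizable, so $\tr(Q^{(r)}P^{(r)})=\sum_i \bar\lambda_i^{(r)}\lambda_i^{(r)}$ with eigenvalues paired by their common eigenspaces. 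Summing the paired products against their multiplicities, almost everything cancels; keeping only the terms linear in $n_r$ (the rest vanish as $n_r\to 0$) leaves the telescoping expression (\ref{eq:rrsbfinalsaddleI12rsbsqrt}). This is a direct extension of the $r=1,2$ cancellations and carries no conceptual difficulty.

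The heart of the argument, and the step I expect to be the main obstacle, is $I_P^{(r)}$. Writing $\frac{\beta^2}{2}\tr(P^{(r)}Y^TY)$ in expanded form shows it is a weighted sum of squared block sums $\|\1^TY^{(i,k)}\|_2^2$ with weights $\beta^2(p_k-p_{k-1})$. Each such quadratic term is linearized by a Hubbard--Stratonovich integral $e^{\frac{\alpha^2}{2}\|v\|_2^2}=\int e^{\alpha v^T\z}D\z$, introducing exactly one Gaussian field $\z_P^{(k)}$ per hierarchy level. The combinatorial content is that the field attached to level $k$ is shared by precisely those replicas lying in a common block of size $m_k$, so after summing over the replicated alphabet each block contributes a factor raised to a ratio of consecutive block sizes $m_k/m_{k+1}$. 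Carrying this out from the innermost block outward reproduces, level by level, the nested tower of powered expectations $\zeta_{r,P}$ in (\ref{eq:rrsbp}); the induction step is simply that adding one more level inserts one more Gaussian integral and one more power into the tower, which is exactly the passage visible between (\ref{eq:finalsaddleI3sqrtcomp}) and (\ref{eq:2rsbfinalsaddleI3sqrtcomp}). Finally, at the outermost level I would use $n_r\to 0$ to linearize $\log\big(1+\tfrac{n_r}{m_1}A\big)\approx \tfrac{n_r}{m_1}A$ and divide by $n_r$, removing the replica index and yielding (\ref{eq:rrsbfinalsaddleI3sqrtcomp}). The delicate point is to verify that the power exponents telescope correctly through all $r$ levels and that the boundary conventions $m_{r+1}=m_{r+2}=1$, $p_{r+1}=1$, $p_{-1}=0$ are precisely those that make the innermost ($\cY$-sum) and outermost ($n_r\to 0$) layers consistent.

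Invoking the $I_P^{(r)}\leftrightarrow I_Q^{(r)}$ symmetry then gives (\ref{eq:rrsbfinalsaddleI2sqrtcomp}) with the barred sequence $\bar\alpha$ verbatim. Adding the three pieces, rescaling by $\beta=\bar\beta\sqrt{n}$ as in the passage to (\ref{eq:limlogpartfunsqrtaa00}), and recognizing that the combined cross term reassembles into a single tower with the product weights $\bq_kp_k-\bq_{k-1}p_{k-1}$ (hence the $\tilde\alpha$ sequence of (\ref{eq:rrsbpq3})) produces the compact form (\ref{eq:rrsbfinalfreeenergycomp1sqrt}), completing the induction. As the author notes, all of this proceeds inside the formal replica framework, so the claim should be read as the statement that the ansatz-based computation yields this closed form, rather than as a measure-theoretically rigorous identity.
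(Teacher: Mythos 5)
Your proposal is correct and follows essentially the same route as the paper: the paper also splits $s_{sq}(\beta,P^{(r)},Q^{(r)})/n_r$ into $I_{Q,P}^{(r)}+I_Q^{(r)}+I_P^{(r)}$, computes the cross term by pairing the eigenvalues of the simultaneously diagonalizable hierarchical matrices, and obtains the nested tower $\zeta_{r,P}$ by repeating the Gaussian-linearization and $n_r\to 0$ steps of the $1$- and $2$-rsb derivations level by level, with the same boundary conventions $m_{r+1}=m_{r+2}=1$, $p_{r+1}=1$, $p_{-1}=0$. Your organization of this as an explicit induction on $r$ is just a cleaner packaging of what the paper calls ``repeating line by line the strategy'' of the first two levels, and your closing caveat about the formal (non-rigorous) replica character of the claim matches the paper's own disclaimer.
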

Observation \ref{obs:rrsbsqrt} effectively establishes all that is needed to practically utilize the Parisi ansatz type of replica symmetry breaking.

\section{Fully lifted interpolation of bilinearly indexed random processes}
\label{sec:flrdt}

In \cite{Stojnicnflgscompyx23,Stojnicsflgscompyx23}  powerful tools for studying comparative behavior of bilinearly indexed random processes are introduced. They are based on the so-called fully lifted (fl) interpolation. In particular, the following theorem is among the key results of \cite{Stojnicsflgscompyx23}.

\begin{theorem}(\cite{Stojnicsflgscompyx23})
\label{thm:thm3}
Let $r\in\mN$ and $k\in\{1,2,\dots,r+1\}$ and consider vectors $\m=[\m_0,\m_1,\m_2,...,\m_r,\m_{r+1}]$ with $\m_0=1$ and $\m_{r+1}=0$,
$\p=[\p_0,\p_1,...,\p_r,\p_{r+1}]$ with $1\geq\p_0\geq \p_1\geq \p_2\geq \dots \geq \p_r\geq \p_{r+1} =0$ and $\q=[\q_0,\q_1,\q_2,\dots,\q_r,\q_{r+1}]$ with $1\geq\q_0\geq \q_1\geq \q_2\geq \dots \geq \q_r\geq \q_{r+1} = 0$. Let the components of $G\in\mR^{m\times n}$, $u^{(4,k)}\in\mR$, $\u^{(2,k)}\in\mR^m$, and $\h^{(k)}\in\mR^n$ be i.i.d. standard normals. Also, let $a_k=\sqrt{\p_{k-1}\q_{k-1}-\p_k\q_k}$, $b_k=\sqrt{\p_{k-1}-\p_{k}}$, and $c_k=\sqrt{\q_{k-1}-\q_{k}}$, and let ${\mathcal U}_k\triangleq [u^{(4,k)},\u^{(2,k)},\h^{(2k)}]$.  Assuming that set ${\mathcal X}=\{\x^{(1)},\x^{(2)},\dots,\x^{(l)}\}$, where $\x^{(i)}\in \mR^{n},1\leq i\leq l$, set ${\mathcal Y}=\{\y^{(1)},\y^{(2)},\dots,\y^{(l)}\}$, where $\y^{(i)}\in \mR^{m},1\leq i\leq l$, and scalars $\beta\geq 0$ and $s\in\mR$ are given, and consider the following function
\begin{equation}\label{eq:thm3eq1}
\psi(\calX,\calY,\p,\q,\m,\beta,s,t)  =  \mE_{G,{\mathcal U}_{r+1}} \frac{1}{\beta|s|\sqrt{n}\m_r} \log
\lp \mE_{{\mathcal U}_{r}} \lp \dots \lp \mE_{{\mathcal U}_2}\lp\lp\mE_{{\mathcal U}_1} \lp Z^{\m_1}\rp\rp^{\frac{\m_2}{\m_1}}\rp\rp^{\frac{\m_3}{\m_2}} \dots \rp^{\frac{\m_{r}}{\m_{r-1}}}\rp,
\end{equation}
where
\begin{eqnarray}\label{eq:thm3eq2}
Z & \triangleq & \sum_{i_1=1}^{l}\lp\sum_{i_2=1}^{l}e^{\beta D_0^{(i_1,i_2)}} \rp^{s} \nonumber \\
 D_0^{(i_1,i_2)} & \triangleq & \sqrt{t}(\y^{(i_2)})^T
 G\x^{(i_1)}+\sqrt{1-t}\|\x^{(i_2)}\|_2 (\y^{(i_2)})^T\lp\sum_{k=1}^{r+1}b_k\u^{(2,k)}\rp\nonumber \\
 & & +\sqrt{t}\|\x^{(i_1)}\|_2\|\y^{(i_2)}\|_2\lp\sum_{k=1}^{r+1}a_ku^{(4,k)}\rp +\sqrt{1-t}\|\y^{(i_2)}\|_2\lp\sum_{k=1}^{r+1}c_k\h^{(k)}\rp^T\x^{(i)}
 \end{eqnarray}
 Let
  \begin{eqnarray}\label{eq:rthlev2genanal7a}
\zeta_r\triangleq \mE_{{\mathcal U}_{r}} \lp \dots \lp \mE_{{\mathcal U}_2}\lp\lp\mE_{{\mathcal U}_1} \lp Z^{\frac{\m_1}{\m_0}}\rp\rp^{\frac{\m_2}{\m_1}}\rp\rp^{\frac{\m_3}{\m_2}} \dots \rp^{\frac{\m_{r}}{\m_{r-1}}}, r\geq 1.
\end{eqnarray}
One then has
\begin{eqnarray}\label{eq:rthlev2genanal7b}
\zeta_k = \mE_{{\mathcal U}_{k}} \lp  \zeta_{k-1} \rp^{\frac{\m_{k}}{\m_{k-1}}}, k\geq 2,\quad \mbox{and} \quad
\zeta_1=\mE_{{\mathcal U}_1} \lp Z^{\frac{\m_1}{\m_0}}\rp,
\end{eqnarray}
with $\zeta_0=Z$ set for the completeness.  Moreover, consider the operators
\begin{eqnarray}\label{eq:thm3eq3}
 \Phi_{{\mathcal U}_k} & \triangleq &  \mE_{{\mathcal U}_{k}} \frac{\zeta_{k-1}^{\frac{\m_k}{\m_{k-1}}}}{\zeta_k},
 \end{eqnarray}
and set
\begin{eqnarray}\label{eq:thm3eq4}
  \gamma_0(i_1,i_2) & = &
\frac{(C^{(i_1)})^{s}}{Z}  \frac{A^{(i_1,i_2)}}{C^{(i_1)}} \nonumber \\
\gamma_{01}^{(r)}  & = & \prod_{k=r}^{1}\Phi_{{\mathcal U}_k} (\gamma_0(i_1,i_2)) \nonumber \\
\gamma_{02}^{(r)}  & = & \prod_{k=r}^{1}\Phi_{{\mathcal U}_k} (\gamma_0(i_1,i_2)\times \gamma_0(i_1,p_2)) \nonumber \\
\gamma_{k_1+1}^{(r)}  & = & \prod_{k=r}^{k_1+1}\Phi_{{\mathcal U}_k} \lp \prod_{k=k_1}^{1}\Phi_{{\mathcal U}_k}\gamma_0(i_1,i_2)\times \prod_{k=k_1}^{1} \Phi_{{\mathcal U}_k}\gamma_0(p_1,p_2) \rp.
 \end{eqnarray}
Also, let
\begin{eqnarray}\label{eq:thm3eq41}
\phi^{(k_1,\p)} & = &
  -(1-t)\lp \m_{k_1}-\m_{k_1+1}\rp \mE_{G,{\mathcal U}_{r+1}} \langle \|\x^{(i_1)}\|_2\|\x^{(p_1)}\|_2(\y^{(p_2)})^T\y^{(i_2)} \rangle_{\gamma_{k_1+1}^{(1)}} \nonumber \\
& &   - t \q_{k_1}
\lp \m_{k_1}-\m_{k_1+1}\rp  \mE_{G,{\mathcal U}_{r+1}} \langle\|\x^{(i_1)}\|_2\|\x^{(p_1)}\|_2\|\y^{(i_2)}\|_2\|\y^{(p_2)}\|_2\rangle_{\gamma_{k_1+1}^{(1)}}\nonumber \\
\phi^{(k_1,\q)} & = &
   -(1-t)  \lp \m_{k_1}-\m_{k_1+1}\rp  \mE_{G,{\mathcal U}_{r+1}} \langle \|\y^{(i_2)}\|_2\|\y^{(p_2)}\|_2(\x^{(i_1)})^T\x^{(p_1)}\rangle_{\gamma_{k_1+1}^{(1)}} \nonumber \\
& &   - t \p_{k_1}
\lp \m_{k_1}-\m_{k_1+1}\rp  \mE_{G,{\mathcal U}_{r+1}} \langle\|\x^{(i_1)}\|_2\|\x^{(p_1)}\|_2\|\y^{(i_2)}\|_2\|\y^{(p_2)}\|_2\rangle_{\gamma_{k_1+1}^{(1)}}.
\end{eqnarray}
Then for $k_1\in\{1,2,\dots,r\}$
\begin{eqnarray}\label{eq:thm3eq42}
\frac{d\psi(\calX,\calY,\p,\q,\m,\beta,s,t)}{d\p_{k_1}}  & = &       \frac{\mbox{sign}(s)s\beta}{2\sqrt{n}} \phi^{(k_1,\p)}\nonumber \\
\frac{d\psi(\calX,\calY,\p,\q,\m,\beta,s,t)}{d\q_{k_1}}  & = &       \frac{\mbox{sign}(s)s\beta}{2\sqrt{n}} \phi^{(k_1,\q)}.
 \end{eqnarray}
 \end{theorem}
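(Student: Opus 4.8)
The plan is to view $\psi$ as a function of the Gaussian coefficients $a_k,b_k,c_k$ and to differentiate through the nested $\zeta$-recursion by combining the chain rule with Gaussian integration by parts. Since $\p_{k_1}$ enters $\psi$ only through $D_0^{(i_1,i_2)}$, and $c_k=\sqrt{\q_{k-1}-\q_k}$ carries no $\p$-dependence, the whole dependence is confined to $a_{k_1},a_{k_1+1},b_{k_1},b_{k_1+1}$. Working with the squared coefficients, which are affine in $\p_{k_1}$, I would first record
\begin{equation}
\frac{\partial a_{k_1}^2}{\partial\p_{k_1}}=-\q_{k_1},\quad
\frac{\partial a_{k_1+1}^2}{\partial\p_{k_1}}=\q_{k_1},\quad
\frac{\partial b_{k_1}^2}{\partial\p_{k_1}}=-1,\quad
\frac{\partial b_{k_1+1}^2}{\partial\p_{k_1}}=1 .
\end{equation}
The scalar Gaussians $u^{(4,k)}$, coupled to $\|\x^{(i_1)}\|_2\|\y^{(i_2)}\|_2$ with weight $\sqrt t\,a_k$, will be responsible for the $t\q_{k_1}$ four-norm term of $\phi^{(k_1,\p)}$; the vector Gaussians $\u^{(2,k)}\in\mR^m$, coupled to $\|\x^{(i_1)}\|_2\y^{(i_2)}$ with weight $\sqrt{1-t}\,b_k$, will be responsible for the $(1-t)$ term, the inner product over the $m$ components of $\u^{(2,k)}$ producing exactly the overlap $(\y^{(p_2)})^T\y^{(i_2)}$.

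The core is a differentiation lemma for the Parisi-type recursion $\zeta_k=\mE_{{\mathcal U}_k}\zeta_{k-1}^{\m_k/\m_{k-1}}$ of \eqref{eq:rthlev2genanal7b}. Differentiating $\mE_{G,{\mathcal U}_{r+1}}\log\zeta_r$ with respect to a parameter $\theta$ that enters only $Z=\zeta_0$, the chain rule unwinds the recursion into a composition of the tilting operators $\Phi_{{\mathcal U}_k}$ of \eqref{eq:thm3eq3}: each level contributes the reweighting $\zeta_{k-1}^{\m_k/\m_{k-1}}/\zeta_k$, so that the derivative equals the multilevel Gibbs average $\langle\cdot\rangle$ of $\frac{1}{Z}\frac{\partial Z}{\partial\theta}$. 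Because $Z=\sum_{i_1}\lp\sum_{i_2}e^{\beta D_0^{(i_1,i_2)}}\rp^{s}$, one has $\frac{1}{Z}\frac{\partial Z}{\partial D_0^{(i_1,i_2)}}=s\beta\,\gamma_0(i_1,i_2)$ with $\gamma_0$ the normalized weight of \eqref{eq:thm3eq4}, so $\theta=a_{k_1}$ brings down one factor of the scalar $u^{(4,k_1)}$ inside this average. I would then integrate by parts in exactly that variable. Since $u^{(4,k_1)}$ is integrated at level $k_1$ of the recursion, the by-parts step couples two replica lines that share the outer reweighting $\prod_{k=r}^{k_1+1}\Phi_{{\mathcal U}_k}$ and split into two independent inner factors $\prod_{k=k_1}^{1}\Phi_{{\mathcal U}_k}$ below, which is precisely the coupling kernel $\gamma_{k_1+1}^{(1)}$ built in \eqref{eq:thm3eq4}. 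Passing from $\partial/\partial a_{k_1}^2$ to $\tfrac{1}{2a_{k_1}}\,\partial/\partial a_{k_1}$, the factor $a_{k_1}$ that integration by parts pulls out of the Boltzmann exponent cancels the $1/a_{k_1}$, leaving a finite limit and the overall $\tfrac12$.

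I would then assemble the four contributions so that the replica-diagonal pieces cancel. The ``self'' term ($i=p$) produced by integration by parts is a single-replica quantity common to the $a_{k_1}$ and $a_{k_1+1}$ slots; since these enter $\partial/\partial\p_{k_1}$ with opposite signs $\mp\q_{k_1}$, the diagonal contributions drop out, and the surviving off-diagonal two-replica terms carry the mass weight $\m_{k_1}-\m_{k_1+1}$ inherited from the exponents $\m_k/\m_{k-1}$ of the operators $\Phi_{{\mathcal U}_k}$ being differentiated. Collecting the $a$-contributions (weight $t\q_{k_1}$, kernel $\|\x^{(i_1)}\|_2\|\x^{(p_1)}\|_2\|\y^{(i_2)}\|_2\|\y^{(p_2)}\|_2$) and the $b$-contributions (weight $1-t$, kernel $\|\x^{(i_1)}\|_2\|\x^{(p_1)}\|_2(\y^{(p_2)})^T\y^{(i_2)}$), and absorbing the constants (two factors of $s\beta$ from differentiating the exponential structure of $Z$, the $\tfrac12$ above, and the normalization $\frac{1}{\beta|s|\sqrt n\m_r}$ of \eqref{eq:thm3eq1}, with $\mbox{sign}(s)=s/|s|$ and the surviving $\m_r$ matched by the $\Phi$-mass factors) yields $\frac{d\psi}{d\p_{k_1}}=\frac{\mbox{sign}(s)s\beta}{2\sqrt n}\phi^{(k_1,\p)}$ exactly. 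The derivation of $\frac{d\psi}{d\q_{k_1}}$ is identical after interchanging the roles of $\x$ and $\y$ and of the $\y$-side Gaussians $\u^{(2,k)}$ ($b_k$) with the $\x$-side Gaussians $\h^{(k)}$ ($c_k$), turning $(\y^{(p_2)})^T\y^{(i_2)}$ into $(\x^{(i_1)})^T\x^{(p_1)}$ and the weight $\q_{k_1}$ into $\p_{k_1}$.

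The step I expect to be the main obstacle is the integration-by-parts bookkeeping inside the nested structure. One must differentiate not only the innermost $Z$ but also the operators $\Phi_{{\mathcal U}_k}$ themselves, and verify that the cascade of terms this generates at levels $k\geq k_1$ collapses to the single kernel $\gamma_{k_1+1}^{(1)}$ with the clean weight $\m_{k_1}-\m_{k_1+1}$, rather than leaving residual cross-level overlaps or uncancelled diagonal pieces, including at the boundary cases $k_1=1$ and $k_1=r$ where $\m_0=1$ and $\m_{r+1}=0$ enter. Controlling exactly how each level-$k_1$ Gaussian correlates the two replica copies through all the outer reweightings, and checking the cancellation against the adjacent slot, is where the precise index pattern $(i_1,i_2)$ versus $(p_1,p_2)$ and the mass difference in \eqref{eq:thm3eq41} emerge.
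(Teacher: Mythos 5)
You should first be aware that this paper never proves Theorem~\ref{thm:thm3}: it is imported verbatim (citation and all) from \cite{Stojnicsflgscompyx23}, and no argument for it appears anywhere in the text, so there is no in-paper proof to compare your attempt against. What can be assessed is whether your route is the one the cited machinery is built on, and whether your structural claims are internally consistent.

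On that score your plan is sound and follows what is essentially the canonical (and, given how the sfl RDT framework is constructed, surely the intended) derivation. The localization of the $\p_{k_1}$-dependence to $a_{k_1},a_{k_1+1},b_{k_1},b_{k_1+1}$ with the stated derivatives of the squared coefficients is correct; the chain rule through $\zeta_k=\mE_{{\mathcal U}_k}\zeta_{k-1}^{\m_k/\m_{k-1}}$ does telescope into $\m_r\prod_{k=r}^{1}\Phi_{{\mathcal U}_k}$ acting on $\frac{1}{Z}\frac{\partial Z}{\partial\theta}=s\beta\langle \partial_\theta D_0\rangle_{\gamma_0}$, with the $\m_r$ cancelling the $1/\m_r$ in (\ref{eq:thm3eq1}); your constants audit is right, since $(s\beta)^2/(\beta|s|)=\mbox{sign}(s)s\beta$, the $1/(2a_{k_1})$ from $\partial a_{k_1}/\partial\p_{k_1}$ is absorbed by the factor $a_{k_1}$ that integration by parts extracts from the exponent, and the $(\sqrt{t})^2$ and $(\sqrt{1-t})^2$ factors produce the $t\q_{k_1}$ and $(1-t)$ weights; and the componentwise integration by parts over $\u^{(2,k)}\in\mR^m$ indeed generates the overlap $(\y^{(p_2)})^T\y^{(i_2)}$, while the scalar $u^{(4,k)}$ generates the four-norm kernel. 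The cancellation you invoke between the $k_1$ and $k_1+1$ slots (opposite signs $\mp\q_{k_1}$, resp.\ $\mp 1$) is also the right mechanism for eliminating the single-replica diagonal terms.

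The one substantive caveat is the one you flag yourself: integration by parts on a level-$k_1$ Gaussian does not act only on the explicit linear factor and the innermost $Z$; it also differentiates the reweighting ratios $\zeta_{j}^{\m_{j+1}/\m_j}/\zeta_{j+1}$ at every level $j<k_1$ and the factor $\zeta_{k_1-1}^{\m_{k_1}/\m_{k_1-1}}$ at level $k_1$, producing two-replica terms with kernels $\gamma_{j}^{(r)}$ for \emph{all} $j\le k_1$, each with its own $\m$-weight, not just the single kernel $\gamma_{k_1+1}^{(r)}$. The claimed result requires that, upon subtracting the slot-$(k_1+1)$ cascade from the slot-$k_1$ cascade, all kernels at levels $j\le k_1$ cancel identically and only the level-$(k_1+1)$ term survives with net weight $\m_{k_1}-\m_{k_1+1}$. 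Your sketch asserts this collapse but does not exhibit it, and it is precisely where the proof lives; until that telescoping is written out (including the boundary cases $k_1=1$ and $k_1=r$, where $\m_0=1$ and $\m_{r+1}=0$ enter the weight), the argument is a correct and well-aimed plan rather than a proof. As a minor point, you work with $\gamma_{k_1+1}^{(1)}$ as the statement does, but the system (\ref{eq:saip4}) that this theorem feeds uses $\gamma_{k_1+1}^{(r)}$; this is a notational inconsistency inherited from the paper itself rather than an error of yours, but your write-up should fix one convention and keep it.
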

The following fundamental result from \cite{Stojnicnflgscompyx23} is another essential ingredient.
\begin{theorem}(\cite{Stojnicnflgscompyx23})
\label{thm:thm4}
Assume the setup of Theorem \ref{thm:thm3}. Let also
\begin{eqnarray}\label{eq:thm3eq5}
 \phi_{k_1}^{(r)} & = &
-s(\m_{k_1-1}-\m_{k_1}) \nonumber \\
&  & \times
\mE_{G,{\mathcal U}_{r+1}} \langle (\p_{k_1-1}\|\x^{(i_1)}\|_2\|\x^{(p_1)}\|_2 -(\x^{(p_1)})^T\x^{(i_1)})(\q_{k_1-1}\|\y^{(i_2)}\|_2\|\y^{(p_2)}\|_2 -(\y^{(p_2)})^T\y^{(i_2)})\rangle_{\gamma_{k_1}^{(r)}} \nonumber \\
 \phi_{01}^{(r)} & = & (1-\p_0)(1-\q_0)\mE_{G,{\mathcal U}_{r+1}}\langle \|\x^{(i_1)}\|_2^2\|\y^{(i_2)}\|_2^2\rangle_{\gamma_{01}^{(r)}} \nonumber\\
\phi_{02}^{(r)} & = & (s-1)(1-\p_0)\mE_{G,{\mathcal U}_{r+1}}\left\langle \|\x^{(i_1)}\|_2^2 \lp\q_0\|\y^{(i_2)}\|_2\|\y^{(p_2)}\|_2-(\y^{(p_2)})^T\y^{(i_2)}\rp\right\rangle_{\gamma_{02}^{(r)}}. \end{eqnarray}
Then
\begin{eqnarray}\label{eq:thm3eq6}
\frac{d\psi(\calX,\calY,\p,\q,\m,\beta,s,t)}{dt}  & = &       \frac{\mbox{sign}(s)\beta}{2\sqrt{n}} \lp  \lp\sum_{k_1=1}^{r+1} \phi_{k_1}^{(r)}\rp +\phi_{01}^{(r)}+\phi_{02}^{(r)}\rp.
 \end{eqnarray}
It particular, choosing $\p_0=\q_0=1$, one also has
\begin{eqnarray}\label{eq:rthlev2genanal43}
\frac{d\psi(\calX,\calY,\p,\q,\m,\beta,s,t)}{dt}  & = &       \frac{\mbox{sign}(s)\beta}{2\sqrt{n}} \sum_{k_1=1}^{r+1} \phi_{k_1}^{(r)} .
 \end{eqnarray}
 \end{theorem}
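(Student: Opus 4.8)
The plan is to treat $\psi$ as a nested free energy whose only $t$-dependence sits inside the innermost object $Z$ of \eqref{eq:thm3eq2}, and to extract $d\psi/dt$ by Gaussian integration by parts (Stein's identity) applied separately to each independent Gaussian family $G$, $u^{(4,k)}$, $\u^{(2,k)}$, $\h^{(k)}$. First I would differentiate the nested structure \eqref{eq:thm3eq1} through the recursion \eqref{eq:rthlev2genanal7b}. Since $t$ enters only through $D_0^{(i_1,i_2)}$, the chain rule applied to each power $\zeta_{k-1}^{\m_k/\m_{k-1}}$ produces the normalized reweightings $\Phi_{{\mathcal U}_k}$ of \eqref{eq:thm3eq3}, and the prefactors $\m_k/\m_{k-1}$ telescope to $\m_r$ (using $\m_0=1$), cancelling the $\m_r$ in the denominator. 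Writing $C^{(i_1)}=\sum_{i_2}e^{\beta D_0^{(i_1,i_2)}}$ and $A^{(i_1,i_2)}=e^{\beta D_0^{(i_1,i_2)}}$, one finds $\frac1Z\frac{dZ}{dt}=s\beta\sum_{i_1,i_2}\gamma_0(i_1,i_2)\frac{dD_0^{(i_1,i_2)}}{dt}$ with $\gamma_0$ exactly as in \eqref{eq:thm3eq4}; the explicit $s\beta$ then cancels the $\beta|s|$ in the denominator of \eqref{eq:thm3eq1}, leaving
\[
\frac{d\psi}{dt}=\frac{\mbox{sign}(s)}{\sqrt n}\,\mE_{G,{\mathcal U}_{r+1}}\prod_{k=r}^{1}\Phi_{{\mathcal U}_k}\left(\sum_{i_1,i_2}\gamma_0(i_1,i_2)\,\frac{dD_0^{(i_1,i_2)}}{dt}\right).
\]

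Next I would compute $dD_0^{(i_1,i_2)}/dt$ from \eqref{eq:thm3eq2}, obtaining four terms carrying $\tfrac1{2\sqrt t}$ (for $G$ and $u^{(4,k)}$) and $-\tfrac1{2\sqrt{1-t}}$ (for $\u^{(2,k)}$ and $\h^{(k)}$). For each family I would integrate by parts against all of its appearances in the hierarchical weight. Each undifferentiated copy inside $\gamma_0$ and inside the operators $\Phi_{{\mathcal U}_k}$ supplies a matching $\sqrt t$ or $\sqrt{1-t}$ together with a factor $\beta$ from the exponent $e^{\beta D_0}$, so the prefactors combine into clean $\tfrac{\beta}{2}t$ or $\tfrac{\beta}{2}(1-t)$ coefficients, recovering the missing $\beta/2$. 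The covariance of $G$ yields the product overlap $((\x^{(i_1)})^T\x^{(p_1)})((\y^{(i_2)})^T\y^{(p_2)})$; the decoupled families yield norm-weighted single overlaps, namely $\u^{(2,k)}$ gives $\|\x^{(i_1)}\|_2\|\x^{(p_1)}\|_2(\y^{(i_2)})^T\y^{(p_2)}$, $\h^{(k)}$ gives $\|\y^{(i_2)}\|_2\|\y^{(p_2)}\|_2(\x^{(i_1)})^T\x^{(p_1)}$, and $u^{(4,k)}$ gives the full norm product $\|\x^{(i_1)}\|_2\|\x^{(p_1)}\|_2\|\y^{(i_2)}\|_2\|\y^{(p_2)}\|_2$. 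Differentiating the nested $\Phi_{{\mathcal U}_k}$ is what introduces the level weights $\m_{k_1-1}-\m_{k_1}$ and routes the second replica $(p_1,p_2)$ to hierarchy level $k_1$, producing precisely the averages $\langle\cdot\rangle_{\gamma_{k_1}^{(r)}}$.

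The decisive step is the telescoping enabled by $a_k^2=\p_{k-1}\q_{k-1}-\p_k\q_k$, $b_k^2=\p_{k-1}-\p_k$, $c_k^2=\q_{k-1}-\q_k$. Summed against the level structure, the diagonal ($p=i$) contributions of the four families cancel, and at each level $k_1$ the cross contributions assemble into the single combination $(\p_{k_1-1}\|\x^{(i_1)}\|_2\|\x^{(p_1)}\|_2-(\x^{(p_1)})^T\x^{(i_1)})(\q_{k_1-1}\|\y^{(i_2)}\|_2\|\y^{(p_2)}\|_2-(\y^{(p_2)})^T\y^{(i_2)})$ defining $\phi_{k_1}^{(r)}$. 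The residual pieces coming from the top of the hierarchy, carrying the factors $(1-\p_0)$ and $(1-\q_0)$, are exactly $\phi_{01}^{(r)}$ and $\phi_{02}^{(r)}$, which is why the special choice $\p_0=\q_0=1$ annihilates them and yields \eqref{eq:rthlev2genanal43}.

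The main obstacle I expect is the combinatorial bookkeeping of the integration by parts across the nested operators: one must check that differentiating each $\Phi_{{\mathcal U}_k}$ contributes the correct weight $\m_{k_1-1}-\m_{k_1}$ and sends the second replica to the right level, and that the sums $\sum_k a_k^2$, $\sum_k b_k^2$, $\sum_k c_k^2$ telescope so that all single-replica (diagonal) terms vanish exactly. Aligning the signs and the $t$ versus $1-t$ split, with the $G$-coupling on one side and the decoupled families on the other, is where the precise choice of $a_k,b_k,c_k$ and the $\sqrt t,\sqrt{1-t}$ scalings in \eqref{eq:thm3eq2} is indispensable.
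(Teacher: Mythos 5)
Your proposal is correct and is essentially the intended proof: this paper states Theorem \ref{thm:thm4} without proof (it is imported from \cite{Stojnicnflgscompyx23}), and the argument there is precisely your plan --- differentiate through the nested recursion to reach $\frac{\mbox{sign}(s)}{\sqrt{n}}\,\mE_{G,{\mathcal U}_{r+1}}\prod_{k=r}^{1}\Phi_{{\mathcal U}_k}\lp\sum_{i_1,i_2}\gamma_0(i_1,i_2)\,dD_0^{(i_1,i_2)}/dt\rp$ (your intermediate formula is exactly right), then apply Gaussian integration by parts separately to $G$, $u^{(4,k)}$, $\u^{(2,k)}$, $\h^{(k)}$, with the telescoping $a_k^2=\p_{k-1}\q_{k-1}-\p_k\q_k$, $b_k^2=\p_{k-1}-\p_k$, $c_k^2=\q_{k-1}-\q_k$ assembling the level-$k_1$ cross terms into the factored $\phi_{k_1}^{(r)}$ and leaving the $(1-\p_0)$, $(1-\q_0)$ residuals as $\phi_{01}^{(r)}$, $\phi_{02}^{(r)}$. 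Two wording slips are worth fixing but do not affect the structure: after integration by parts the coefficients are the clean, $t$-independent $\frac{\beta}{2}$ (the $\frac{1}{2\sqrt{t}}$ and $-\frac{1}{2\sqrt{1-t}}$ from $dD_0/dt$ cancel exactly against the $\sqrt{t}$, $\sqrt{1-t}$ carried by each undifferentiated copy), not $\frac{\beta}{2}t$ or $\frac{\beta}{2}(1-t)$, which is the only way the $t$-free prefactors of (\ref{eq:thm3eq6}) can emerge; and the diagonal $(p_1,p_2)=(i_1,i_2)$ contributions do not cancel in general --- they are exactly $\phi_{01}^{(r)}$ (with the same-$i_1$, different-$i_2$ pairs producing $\phi_{02}^{(r)}$ through the $s(s-1)$ factor from differentiating $(C^{(i_1)})^{s}$), and they vanish only under the special choice $\p_0=\q_0=1$ of (\ref{eq:rthlev2genanal43}).
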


Now, let $\psi_1$ be the following function
\begin{eqnarray}\label{eq:saip1}
\psi_1(\calX,\calY,\p,\q,\m,\beta,s,t) & = & -\frac{\mbox{sign}(s) s \beta}{2\sqrt{n}} \nonumber \\
& & \times \sum_{k=1}^{r+1}\Bigg(\Bigg. \p_{k-1}\q_{k-1}\mE_{G,{\mathcal U}_{r+1}} \langle\|\x^{(i_1)}\|_2\|\x^{(p_1)}\|_2\|\y^{(i_2)}\|_2\|\y^{(p_2)}\|_2\rangle_{\gamma_{k}^{(r)}}\nonumber \\
& & -\p_{k}\q_{k}\mE_{G,{\mathcal U}_{r+1}} \langle\|\x^{(i_1)}\|_2\|\x^{(p_1)}\|_2\|\y^{(i_2)}\|_2\|\y^{(p_2)}\|_2\rangle_{\gamma_{k+1}^{(r)}}\Bigg.\Bigg)
\m_{k} \nonumber \\
& & +\psi(\calX,\calY,\p,\q,\m,\beta,s,t),
 \end{eqnarray}
and consider the following system of equations
\begin{eqnarray}\label{eq:saip4}
\frac{d\psi_1(\calX,\calY,\p,\q,\m,\beta,s,t)}{d\p_{k_1}}
& = &
(1-t)\lp \m_{k_1}-\m_{k_1+1}\rp\frac{\mbox{sign}(s)s\beta}{2\sqrt{n}} \nonumber \\
& & \times \Bigg(\Bigg.   \mE_{G,{\mathcal U}_{r+1}} \langle\|\x^{(i_1)}\|_2\|\x^{(p_1)}\|_2 \lp \q_{k_1} \|\y^{(i_2)}\|_2\|\y^{(p_2)}\|_2 -(\y^{(p_2)})^T\y^{(i_2)}\rp\rangle_{\gamma_{k_1+1}^{(r)}} \Bigg.\Bigg)\nonumber \\
& = & 0, \nonumber \\
\frac{d\psi_1(\calX,\calY,\p,\q,\m,\beta,s,t)}{d\q_{k_1}}
& = &
(1-t)\lp \m_{k_1}-\m_{k_1+1}\rp\frac{\mbox{sign}(s)s\beta}{2\sqrt{n}} \nonumber \\
& & \times \Bigg(\Bigg.   \mE_{G,{\mathcal U}_{r+1}} \langle\|\y^{(i_2)}\|_2\|\y^{(p_2)}\|_2
\lp \p_{k_1} \|\x^{(i_1)}\|_2\|\x^{(p_1)}\|_2 -(\x^{(p_1)})^T\x^{(i_1)}\rp\rangle_{\gamma_{k_1+1}^{(r)}} \Bigg.\Bigg) \nonumber \\
 & = & 0, \nonumber \\
\frac{d\psi_1(\calX,\calY,\p,\q,\m,\beta,s,t)}{d\m_{k_1}}
 & = & 0,
  \end{eqnarray}
where $k_1\in\{1,2,\dots,r\}$ and $\p_0(t)=\q_0(t)=\m_0(t)=1$. The following theorem is then of critical importance.

\begin{theorem}(\cite{Stojnicsflgscompyx23})
\label{thm:thm5}
Assume completely stationirized  fully lifted random duality theory frame (\textbf{\emph{complete sfl RDT frame}}) of \cite{Stojnicsflgscompyx23} with $n\rightarrow\infty$ and $\bar{\p}(t)$, $\bar{\q}(t)$, and $\bar{\m}(t)$ as solutions of system (\ref{eq:saip4}). For $\bar{\p}_0(t)=\bar{\q}_0(t)=1$,
$\bar{\p}_{r+1}(t)=\bar{\q}_{r+1}(t)=\bar{\m}_{r+1}(t)=0$, and $\m_1(t)\rightarrow\m_0(t)=1$ one then has that
$\frac{d\psi_1(\calX,\calY,\bar{\p}(t),\bar{\q}(t),\bar{\m}(t),\beta,s,t)}{dt}   =   0$ and
\begin{eqnarray}\label{eq:thm5eq0}
 \lim_{n\rightarrow\infty}\psi_1(\calX,\calY,\bar{\p}(t),\bar{\q}(t),\bar{\m}(t),\beta,s,t)
& = &\lim_{n\rightarrow\infty}\psi_1(\calX,\calY,\bar{\p}(0),\bar{\q}(0),\bar{\m}(0),\beta,s,0) \nonumber \\
 & = & \lim_{n\rightarrow\infty}\psi_1(\calX,\calY,\bar{\p}(1),\bar{\q}(1),\bar{\m}(1),\beta,s,1).
 \end{eqnarray}
\end{theorem}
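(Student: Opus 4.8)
The statement packages two claims: that the total $t$-derivative of $\psi_1$ evaluated along the solution curves vanishes, and that consequently its value is independent of $t$ (so the $t=0$ and $t=1$ endpoints agree). The second claim is immediate from the first: once $\frac{d}{dt}\psi_1(\calX,\calY,\bar{\p}(t),\bar{\q}(t),\bar{\m}(t),\beta,s,t)=0$ holds on $[0,1]$, the function is constant in $t$, so its values at $0$, at a generic interior $t$, and at $1$ coincide, which is exactly the chain of equalities (\ref{eq:thm5eq0}) after passing to the limit $n\to\infty$. Thus the whole argument reduces to establishing that the total derivative is zero.

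First I would expand that total derivative by the chain rule, treating $\psi_1$ as depending on $t$ both explicitly and through the solution curves:
\[
\frac{d\psi_1}{dt} = \frac{\partial\psi_1}{\partial t} + \sum_{k_1=1}^{r}\lp \frac{\partial\psi_1}{\partial\p_{k_1}}\bar{\p}_{k_1}'(t) + \frac{\partial\psi_1}{\partial\q_{k_1}}\bar{\q}_{k_1}'(t) + \frac{\partial\psi_1}{\partial\m_{k_1}}\bar{\m}_{k_1}'(t)\rp.
\]
Because $\bar{\p}(t),\bar{\q}(t),\bar{\m}(t)$ are by hypothesis solutions of the stationarity system (\ref{eq:saip4}), every partial derivative inside the sum vanishes identically; the remaining coordinates $\p_0,\q_0,\m_0$ and $\p_{r+1},\q_{r+1},\m_{r+1}$ are pinned at the fixed boundary values $1$ and $0$, so they contribute nothing. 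This is an envelope-type reduction and it collapses the total derivative to the explicit one, $\frac{d\psi_1}{dt}=\frac{\partial\psi_1}{\partial t}$.

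The substance of the proof is then to show that this explicit derivative vanishes at the stationary point. Using the decomposition in (\ref{eq:saip1}) of $\psi_1$ as a quartic-norm correction $C$ (the telescoping sum weighted by $\m_k$) plus $\psi$, I would write $\frac{\partial\psi_1}{\partial t}=\frac{\partial C}{\partial t}+\frac{\partial\psi}{\partial t}$. For $\frac{\partial\psi}{\partial t}$ I would invoke the particular case (\ref{eq:rthlev2genanal43}) of Theorem \ref{thm:thm4}: with the choice $\bar{\p}_0=\bar{\q}_0=1$ the boundary contributions $\phi_{01}^{(r)}$ and $\phi_{02}^{(r)}$ die (they carry the prefactors $(1-\p_0)(1-\q_0)$ and $(s-1)(1-\p_0)$), leaving only $\sum_{k_1=1}^{r+1}\phi_{k_1}^{(r)}$. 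Expanding the bracket of each $\phi_{k_1}^{(r)}$ in (\ref{eq:thm3eq5}) and aligning $\phi_{k_1+1}^{(r)}$ with the level-$k_1$ equations, the product of the two matched differences $\p_{k_1}\|\x^{(i_1)}\|_2\|\x^{(p_1)}\|_2-(\x^{(p_1)})^T\x^{(i_1)}$ and $\q_{k_1}\|\y^{(i_2)}\|_2\|\y^{(p_2)}\|_2-(\y^{(p_2)})^T\y^{(i_2)}$ splits into precisely the two overlap-matching brackets appearing in (\ref{eq:saip4}), plus a residual made of the pure quartic-norm term $\p_{k_1}\q_{k_1}\langle\|\x^{(i_1)}\|_2\|\x^{(p_1)}\|_2\|\y^{(i_2)}\|_2\|\y^{(p_2)}\|_2\rangle$ and the fully off-diagonal product $\langle(\x^{(p_1)})^T\x^{(i_1)}(\y^{(p_2)})^T\y^{(i_2)}\rangle$. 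At a solution of (\ref{eq:saip4}) both overlap-matching brackets vanish (for $t<1$, where $(1-t)\neq 0$), while the top-level term $\phi_1^{(r)}$ is removed by the limit $\m_1\to\m_0=1$, which sends its coefficient $(\m_0-\m_1)$ to $0$. The role of $C$ is exactly to absorb the surviving quartic residuals: I would compute $\frac{\partial C}{\partial t}$ by the same Gaussian integration-by-parts and nested Gibbs-measure differentiation that underlie Theorems \ref{thm:thm3}--\ref{thm:thm4}, and verify that its telescoping structure in $\m_k$ cancels the residual terms level by level, with the lowest level closed off by $\p_{r+1}=\q_{r+1}=\m_{r+1}=0$ and the result extended to $t=1$ by continuity.

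The main obstacle is precisely this last reconciliation. The stationarity conditions only annihilate the mixed overlap terms in $\sum_{k_1}\phi_{k_1}^{(r)}$; the genuinely quartic residuals persist and must be matched against $\frac{\partial C}{\partial t}$. Differentiating the $\gamma_k^{(r)}$-weighted averages in $C$ with respect to the interpolation variable inserts new overlaps into the nested brackets defined through the operators $\Phi_{{\mathcal U}_k}$, and the delicate step is to organize these insertions so that the telescoping in $\m_k$ lines up across consecutive levels and leaves no boundary leftover once $\m_1\to1$, $\p_0=\q_0=1$, and $\p_{r+1}=\q_{r+1}=\m_{r+1}=0$ are imposed. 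Carrying out that bookkeeping is where the real effort lies; the envelope/chain-rule skeleton and the endpoint conclusion are then routine.
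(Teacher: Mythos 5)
Your proposal is correct in structure and follows essentially the same route as the paper: the paper's own proof of Theorem \ref{thm:thm5} is a one-line deferral to \cite{Stojnicsflgscompyx23}, stating only that the result follows ``through a combination of Theorems \ref{thm:thm3} and \ref{thm:thm4},'' and that combination---envelope/chain-rule reduction along the solution curves of (\ref{eq:saip4}), Theorem \ref{thm:thm4} with $\bar{\p}_0=\bar{\q}_0=1$ killing $\phi_{01}^{(r)}$ and $\phi_{02}^{(r)}$, alignment of $\phi_{k_1+1}^{(r)}$ with the level-$k_1$ stationarity equations, and cancellation of the surviving quartic residuals against the correction sum in (\ref{eq:saip1})---is exactly the skeleton you develop. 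The level-by-level bookkeeping that you describe but do not execute (matching $\partial C/\partial t$ against those residuals, closed off by $\m_1\rightarrow\m_0=1$ and the boundary pinnings) is precisely the part the paper itself leaves entirely to the external reference, so your proposal is, if anything, more detailed than the paper's recorded proof and consistent with it.
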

\begin{proof}
  Proven in \cite{Stojnicsflgscompyx23}, through a combination of Theorems \ref{thm:thm3} and \ref{thm:thm4}.
\end{proof}
Moreover, we have the following practically relevant corollary as well

\begin{corollary}(\cite{Stojnicsflgscompyx23})
\label{thm:thm6}
Assume the setup of Theorem \ref{thm:thm5}. Let ${\mathcal X}$ and ${\mathcal Y}$ be such that $\|\x\|_2=\|\y\|_2=1$ (or, alternatively, such that their concentrating values in the sfl RDT frame sense of \cite{Stojnicsflgscompyx23} are $1$). Then
\begin{eqnarray}\label{eq:thm6eq0}
 \lim_{n\rightarrow\infty}\psi_1(\calX,\calY,\bar{\p}(t),\bar{\q}(t),\bar{\m}(t),\beta,s,t)
& = &\lim_{n\rightarrow\infty}\psi_1(\calX,\calY,\bar{\p}(0),\bar{\q}(0),\bar{\m}(0),\beta,s,0) \nonumber \\
 & = & \lim_{n\rightarrow\infty}\psi_1(\calX,\calY,\bar{\p}(1),\bar{\q}(1),\bar{\m}(1),\beta,s,1),
 \end{eqnarray}
and
\begin{eqnarray}\label{eq:thm6eq1}
\lim_{n\rightarrow\infty}\psi_1(\calX,\calY,\bar{\p}(0),\bar{\q}(0),\bar{\m}(0),\beta,s,0) & = & -\lim_{n\rightarrow\infty}\frac{\mbox{sign}(s) s \beta}{2\sqrt{n}} \sum_{k=1}^{r+1}\Bigg(\Bigg. \bar{\p}_{k-1}(0)\bar{\q}_{k-1}(0)  -\bar{\p}_{k}(0)\bar{\q}_{k}(0)   \Bigg.\Bigg)
\bar{\m}_k(0) \nonumber \\
& &  +\lim_{n\rightarrow\infty}\psi(\calX,\calY,\bar{\p}(0),\bar{\q}(0),\bar{\m}(0),\beta,s,0) \nonumber \\
\lim_{n\rightarrow\infty}\psi_1(\calX,\calY,\bar{\p}(1),\bar{\q}(1),\bar{\m}(1),\beta,s,1) & = &
-\lim_{n\rightarrow\infty}\frac{\mbox{sign}(s) s \beta}{2\sqrt{n}} \sum_{k=1}^{r+1}\Bigg(\Bigg. \bar{\p}_{k-1}(1)\bar{\q}_{k-1}(1)  -\bar{\p}_{k}(1)\bar{\q}_{k}(1)   \Bigg.\Bigg)
\bar{\m}_k(1)  \nonumber \\
& &  +\lim_{n\rightarrow\infty}\psi(\calX,\calY,\bar{\p}(1),\bar{\q}(1),\bar{\m}(1),\beta,s,1).
 \end{eqnarray}
 Moreover, let
 \begin{equation}\label{eq:thm6eq2}
\psi_S(\calX,\calY,\p,\q,\m,\beta,s,t)  =  \mE_{G,{\mathcal U}_{r+1}} \frac{1}{\beta|s|\sqrt{n}\m_r} \log
\lp \mE_{{\mathcal U}_{r}} \lp \dots \lp \mE_{{\mathcal U}_2}\lp\lp\mE_{{\mathcal U}_1} \lp Z_S^{\m_1}\rp\rp^{\frac{\m_2}{\m_1}}\rp\rp^{\frac{\m_3}{\m_2}} \dots \rp^{\frac{\m_{r}}{\m_{r-1}}}\rp,
\end{equation}
where,  analogously to (\ref{eq:thm3eq1}) and (\ref{eq:thm3eq2}),
\begin{equation}\label{eq:thm6eq3}
Z_S  \triangleq  \sum_{i_1=1}^{l}\lp\sum_{i_2=1}^{l}e^{\beta D_{0,S}^{(i_1,i_2)}} \rp^{s} \nonumber \\
\end{equation}
and
\begin{equation}\label{eq:thm6eq4}
 D_{0,S}^{(i_1,i_2)}  \triangleq  \sqrt{t}(\y^{(i_2)})^T
 G\x^{(i_1)}+\sqrt{1-t}\|\x^{(i_2)}\|_2 (\y^{(i_2)})^T\lp\sum_{k=1}^{r+1}b_k\u^{(2,k)}\rp  +\sqrt{1-t}\|\y^{(i_2)}\|_2\lp\sum_{k=1}^{r+1}c_k\h^{(k)}\rp^T\x^{(i)}.
 \end{equation}
Then
\begin{eqnarray}\label{eq:thm6eq5}
 \lim_{n\rightarrow\infty} \psi_S(\calX,\calY,\bar{\p}(1),\bar{\q}(1),\bar{\m}(1),\beta,s,1) & = &
 -\lim_{n\rightarrow\infty}\frac{\mbox{sign}(s) s \beta}{2\sqrt{n}} \sum_{k=1}^{r+1}\Bigg(\Bigg. \bar{\p}_{k-1}(0)\bar{\q}_{k-1}(0)  -\bar{\p}_{k}(0)\bar{\q}_{k}(0)   \Bigg.\Bigg)
\bar{\m}_k(0)
 \nonumber \\
& &  +\lim_{n\rightarrow\infty}\psi_S(\calX,\calY,\bar{\p}(0),\bar{\q}(0),\bar{\m}(0),\beta,s,0). \nonumber \\
 \end{eqnarray}
\end{corollary}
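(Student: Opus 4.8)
The plan is to treat the three asserted identities separately, since the first two are essentially free and only the last, (\ref{eq:thm6eq5}), carries content. Identity (\ref{eq:thm6eq0}) is literally the conclusion (\ref{eq:thm5eq0}) of Theorem~\ref{thm:thm5}, whose hypotheses the corollary inherits, so I would just invoke it. For (\ref{eq:thm6eq1}) I would evaluate the definition (\ref{eq:saip1}) of $\psi_1$ at $t=0$ and $t=1$ under the standing assumption $\|\x\|_2=\|\y\|_2=1$: every bracketed correlator $\langle\|\x^{(i_1)}\|_2\|\x^{(p_1)}\|_2\|\y^{(i_2)}\|_2\|\y^{(p_2)}\|_2\rangle_{\gamma}$ then equals $\langle 1\rangle=1$ for each of the Gibbs measures $\gamma_{k}^{(r)}$, so the double sum in (\ref{eq:saip1}) collapses to $-\frac{\mbox{sign}(s)s\beta}{2\sqrt{n}}\sum_{k=1}^{r+1}(\p_{k-1}\q_{k-1}-\p_k\q_k)\m_k$, which is the claimed form after substituting the saddle values and sending $n\to\infty$.

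The heart of the matter is a single factorization relating $\psi$ to $\psi_S$. Comparing the exponent (\ref{eq:thm3eq2}) of $Z$ with (\ref{eq:thm6eq4}) of $Z_S$, the two differ only by the diagonal channel $\sqrt{t}\,\|\x^{(i_1)}\|_2\|\y^{(i_2)}\|_2\sum_{k=1}^{r+1}a_k u^{(4,k)}$. Under unit norms this contribution is independent of the summation indices $(i_1,i_2)$, so it pulls out of both inner sums and gives
\begin{equation}
Z = Z_S\,\prod_{k=1}^{r+1} e^{s\beta\sqrt{t}\,a_k u^{(4,k)}}.
\end{equation}
I would then push this through the ladder (\ref{eq:rthlev2genanal7b}). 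The decisive structural point is that $u^{(4,k)}$ occurs only in the $k$-th factor and is integrated precisely at level $k$ (it does not enter $Z_S$, since $D_{0,S}$ has dropped the $a_k$ channel), so at each stage the scalar Gaussian identity $\mE_{u^{(4,k)}}e^{\lambda u^{(4,k)}}=e^{\lambda^2/2}$ applies with $\lambda=\m_k s\beta\sqrt{t}\,a_k$, while the remaining factors carry through unchanged. Tracking how the fractional powers $\m_j/\m_{j-1}$ accumulate on these log-normal factors down the ladder yields $\zeta_r=\zeta_{r,S}\exp\!\lp\tfrac{\m_r(s\beta)^2 t}{2}\sum_{k=1}^{r}\m_k a_k^2\rp e^{\m_r s\beta\sqrt{t}\,a_{r+1}u^{(4,r+1)}}$, where $\zeta_{r,S}$ is the $Z_S$-analogue of $\zeta_r$. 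Applying $\frac{1}{\beta|s|\sqrt{n}\m_r}\log(\cdot)$ followed by $\mE_{G,{\mathcal U}_{r+1}}$, which kills the mean-zero outermost term, produces
\begin{equation}
\psi = \psi_S + \frac{\mbox{sign}(s)s\beta\, t}{2\sqrt{n}}\sum_{k=1}^{r}\m_k a_k^2,\qquad a_k^2=\p_{k-1}\q_{k-1}-\p_k\q_k.
\end{equation}

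With this identity the conclusion is short bookkeeping. At $t=0$ the diagonal channel vanishes, so $\psi=\psi_S$ and (\ref{eq:thm6eq1}) becomes $\psi_1(\cdot,0)=-\frac{\mbox{sign}(s)s\beta}{2\sqrt{n}}\sum_{k=1}^{r+1}(\bar\p_{k-1}(0)\bar\q_{k-1}(0)-\bar\p_k(0)\bar\q_k(0))\bar\m_k(0)+\psi_S(\cdot,0)$. At $t=1$, inserting the displayed $\psi=\psi_S+\dots$ identity into the unit-norm form of (\ref{eq:saip1}) makes the two sums telescope: their difference is the single boundary term $(\p_r\q_r-\p_{r+1}\q_{r+1})\m_{r+1}$, which vanishes because $\m_{r+1}=0$, giving $\psi_1(\cdot,1)=\psi_S(\cdot,1)$. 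Chaining the $t$-independence (\ref{eq:thm6eq0}) as $\psi_S(\cdot,1)=\psi_1(\cdot,1)=\psi_1(\cdot,0)$ and substituting the $t=0$ evaluation delivers exactly (\ref{eq:thm6eq5}). The main obstacle is the middle paragraph: one must verify the localization of each $u^{(4,k)}$ to level $k$ and correctly propagate the fractional powers $\m_j/\m_{j-1}$ so that the accumulated exponent is $\tfrac{\m_r(s\beta)^2 t}{2}\sum_{k\le r}\m_k a_k^2$ rather than some other weighting; once that ledger is confirmed, everything else is either a direct appeal to Theorem~\ref{thm:thm5} or the telescoping enabled by $\m_{r+1}=0$.
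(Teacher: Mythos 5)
Your proposal is correct, but a direct comparison with the paper's own argument is not possible: the paper states this corollary as an imported result of \cite{Stojnicsflgscompyx23} and supplies no proof whatsoever (its only nearby guidance is that Theorem \ref{thm:thm5} is ``proven through a combination of Theorems \ref{thm:thm3} and \ref{thm:thm4}''). Judged on its own merits, your reconstruction is sound and fills that gap consistently with the framework's tools. Identity (\ref{eq:thm6eq0}) is indeed verbatim (\ref{eq:thm5eq0}). For (\ref{eq:thm6eq1}) you implicitly use $\langle 1\rangle_{\gamma_{k}^{(r)}}=1$; this deserves one explicit line, and it does hold because $\gamma_0$ sums to one by construction and each operator $\Phi_{{\mathcal U}_k}$ preserves total mass thanks to the ladder identity $\mE_{{\mathcal U}_k}\zeta_{k-1}^{\m_k/\m_{k-1}}=\zeta_k$ from (\ref{eq:rthlev2genanal7b}), so the collapse of (\ref{eq:saip1}) to the telescoped sum is legitimate. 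The heart of your argument also checks out: under unit norms $Z=Z_S\,e^{s\beta\sqrt{t}\sum_{k}a_k u^{(4,k)}}$, each $u^{(4,k)}$ is independent of $Z_S$ and of all other levels, so the scalar Gaussian MGF applies exactly at level $k$, and raising to $\m_{j+1}/\m_j$ rescales the accumulated exponent from $\m_j$ to $\m_{j+1}$; the induction $\zeta_j=\zeta_{j,S}\exp\lp\tfrac{(s\beta)^2t}{2}\m_j\sum_{k\le j}\m_k a_k^2\rp\prod_{k>j}e^{\m_j s\beta\sqrt{t}\,a_k u^{(4,k)}}$ closes, producing precisely the weights $\m_r\m_k$ at stage $r$, after which division by $\beta|s|\sqrt{n}\m_r$ together with $s^2/|s|=\mbox{sign}(s)\,s$ gives your relation $\psi=\psi_S+\tfrac{\mbox{sign}(s)s\beta t}{2\sqrt{n}}\sum_{k\le r}\m_k a_k^2$, the outermost $a_{r+1}u^{(4,r+1)}$ term dying in expectation. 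The endgame is also right: at $t=1$ the correction differs from the sum inside $\psi_1$ by the single boundary term $\lp\bar{\p}_r(1)\bar{\q}_r(1)-\bar{\p}_{r+1}(1)\bar{\q}_{r+1}(1)\rp\bar{\m}_{r+1}(1)$, which vanishes since $\bar{\m}_{r+1}(1)=0$, giving $\psi_1(\cdot,1)=\psi_S(\cdot,1)$; at $t=0$ the correction carries the factor $t$, so $\psi(\cdot,0)=\psi_S(\cdot,0)$; chaining through (\ref{eq:thm6eq0}) then yields (\ref{eq:thm6eq5}) exactly as you wrote.
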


\subsection{Matching Parisi ansatz rsb predictions}
\label{sec:rrsbmatch}

To make the matching with the replica predictions possible, we assume in this section that $\cX$ and $\cY$ are comprised of unit norm elements. Carefully looking at  $\psi_S(\calX,\calY,\bar{\p}(t),\bar{\q}(t),\bar{\m}(t),\beta,s,1)$ one observes that it exactly matches $f_{sq}(\beta)$, the averaged free energy of bilinearly based Hopfield modes in the thermodynamic limit. In other words one has
\begin{eqnarray}\label{eq:match1}
 \lim_{n\rightarrow\infty}\psi_S(\calX,\calY,\bar{\p}(1),\bar{\q}(1),\bar{\m}(1),\beta,s,1)  = f_{sq}(\beta).
 \end{eqnarray}
 To be able to match the right hand side of (\ref{eq:thm6eq5}) to the replica predictions, we would need concrete values of rsb parametrization vectors $\p_{rsb}=[p_r,p_{r-1},\dots,p_0]$, $\bar{\q}_{rsb}=[\bq_r,\bq_{r-1},\dots,\bq_0]$, and $\m_{rsb}=[m_r,m_{r-1},\dots,m_1]$. Even within the replica theory context, it is an extraordinary challenge to properly choose these sequences or their related alternatives $\alpha$ and $\bar{\alpha}$. We make the following choice
 \begin{eqnarray}\label{eq:match2}
   \frac{d\bar{f}_{sq}^{(r)}(\beta)}{d\p_{rsb}} =0, \qquad   \frac{d\bar{f}_{sq}^{(r)}(\beta)}{d\bar{\q}_{rsb}} = 0, \qquad
   \frac{d\bar{f}_{sq}^{(r)}(\beta)}{d\m_{rsb}} =0.
\end{eqnarray}
First we set $p_{r+1}=\bq_{r+1}=m_{r+1}=1$ and connect the fully broken rsb and complete sfl RDT indexation
\begin{eqnarray}\label{eq:match1a1}
p_k\leftrightarrow \bar{\p}_{r+1-k}, \quad
\bq_k\leftrightarrow \bar{\q}_{r+1-k},\quad  m_k\leftrightarrow \bar{\m}_{r+2-k}.
 \end{eqnarray}
Carefully comparing the key ingredients of (\ref{eq:2rsbalphabar12sqrt})-(\ref{eq:rrsbq}) and $\psi_S(\calX,\calY,\bar{\p}(0),\bar{\q}(0),\bar{\m}(0),\beta,s,0)$ (with trivially adjusted indexation), and keeping in mind how the $\bar{\p}(t),\bar{\q}(t),$ and $\bar{\m}(t)$ functions are chosen within the complete sfl RDT frame, one  observes that $\lim_{n\rightarrow\infty}\psi_S(\calX,\calY,\bar{\p}(0),\bar{\q}(0),\bar{\m}(0),\beta,s,0)$ is equal to the sum of the last two terms in (\ref{eq:rrsbfinalfreeenergycompsqrt}). If one relies on the more compact form (\ref{eq:rrsbfinalfreeenergycomp1sqrt}), that basically means that
 \begin{eqnarray}\label{eq:match3}
 \lim_{n\rightarrow\infty}\psi_S(\calX,\calY,\bar{\p}(0),\bar{\q}(0),\bar{\m}(0),\beta,s,0)  =  \lim_{n\rightarrow \infty}\frac{E_{\z_{P}^{(r+1)}}\log \lp \zeta_{r,P}\rp}{\frac{\beta}{\sqrt{n}}  m_1n}
  +\lim_{n\rightarrow \infty}\frac{E_{\z_{Q}^{(r+1)}}\log \lp \zeta_{r,Q}\rp}{\frac{\beta}{\sqrt{n}}  m_1n}.
 \end{eqnarray}
 Choosing $s=1$ one then also easily has
\begin{eqnarray}\label{eq:match4}
  \frac{\beta^2}{2\frac{\beta}{\sqrt{n}}n}
\lp (1-\bq_rp_r)+ \sum_{k=1}^{r}m_k(\bq_kp_k-\bq_{k-1}p_{k-1}) \rp
& = &   \frac{\beta^2}{2\frac{\beta}{\sqrt{n}}n}
\lp  \sum_{k=1}^{r+1}m_k(\bq_kp_k-\bq_{k-1}p_{k-1}) \rp \nonumber \\
& = &   \frac{\beta}{2\sqrt{n}} \sum_{k=1}^{r+1}\Bigg(\Bigg. \bar{\p}_{k-1}(0)\bar{\q}_{k-1}(0)  -\bar{\p}_{k}(0)\bar{\q}_{k}(0)   \Bigg.\Bigg)
\bar{\m}_k(0). \nonumber \\
 \end{eqnarray}
A combination of (\ref{eq:match3}) and (\ref{eq:match4}) gives
 \begin{eqnarray}\label{eq:match5}
\bar{f}_{sq}^{(r)}(\beta) & =&
\lim_{n\rightarrow\infty}\frac{\beta}{2\sqrt{n}} \sum_{k=1}^{r+1}\Bigg(\Bigg. \bar{\p}_{k-1}(0)\bar{\q}_{k-1}(0)  -\bar{\p}_{k}(0)\bar{\q}_{k}(0)   \Bigg.\Bigg)
 \bar{\m}_k(0) \nonumber \\
 & &  + \lim_{n\rightarrow\infty}\psi_S(\calX,\calY,\bar{\p}(0),\bar{\q}(0),\bar{\m}(0),\beta,s,0). \nonumber \\
 \end{eqnarray}
From (\ref{eq:thm3eq1}) and (\ref{eq:thm3eq2}), (\ref{eq:thm6eq2}) and (\ref{eq:thm6eq4}), one further finds
 \begin{eqnarray}\label{eq:match6}
  \psi_S(\calX,\calY,\bar{\p}(0),\bar{\q}(0),\bar{\m}(0),\beta,1,0)
  =   \psi(\calX,\calY,\bar{\p}(0),\bar{\q}(0),\bar{\m}(0),\beta,1,0). \nonumber \\
 \end{eqnarray}
Combining (\ref{eq:match5}), (\ref{eq:match6}), and the first part of (\ref{eq:thm6eq1}),  we have
\begin{eqnarray}\label{eq:match7}
 \bar{f}^{(r)}_{sq}(\beta)  & = &
 -\lim_{n\rightarrow\infty}
 \frac{ \beta}{2\sqrt{n}} \sum_{k=1}^{r+1}\Bigg(\Bigg. \bar{\p}_{k-1}(0)\bar{\q}_{k-1}(0)  -\bar{\p}_{k}(0)\bar{\q}_{k}(0)   \Bigg.\Bigg)
\bar{\m}_k(0) \nonumber \\
&&
+ \lim_{n\rightarrow\infty}  \psi_S(\calX,\calY,\bar{\p}(0),\bar{\q}(0),\bar{\m}(0),\beta,1,0) \nonumber \\
  & = &
 -\lim_{n\rightarrow\infty}\frac{\beta}{2\sqrt{n}} \sum_{k=1}^{r+1}\Bigg(\Bigg. \bar{\p}_{k-1}(0)\bar{\q}_{k-1}(0)  -\bar{\p}_{k}(0)\bar{\q}_{k}(0)   \Bigg.\Bigg)
\bar{\m}_k(0) \nonumber \\
& &
+\lim_{n\rightarrow\infty}  \psi(\calX,\calY,\bar{\p}(0),\bar{\q}(0),\bar{\m}(0),\beta,1,0) \nonumber \\
   & = &
\lim_{n\rightarrow\infty}   \psi_1(\calX,\calY,\bar{\p}(0),\bar{\q}(0),\bar{\m}(0),\beta,1,0). \nonumber \\
 \end{eqnarray}
A combination of (\ref{eq:saip4}), (\ref{eq:match2}), and (\ref{eq:match7}) ensures that the starting premise (\ref{eq:match1a1}) is indeed correct as well as everything else that followed after that. Finally, a combination of (\ref{eq:thm6eq5}), (\ref{eq:match1}), and (\ref{eq:match5}), shows that the \emph{complete sfl RDT} machinery of \cite{Stojnicnflgscompyx23,Stojnicsflgscompyx23} matches Parisi ansatz based full rsb and also implies $f_{sq}(\beta)= \bar{f}^{(r)}_{sq}(\beta)$. The following theorem summarizes all the above considerations.

\begin{theorem}\label{thm:thm7} Assume the setup of Corollary \ref{thm:thm6} (and therefore, implicitly, of Theorem \ref{thm:thm5}). Consider bilinear Hamiltonian based models and their thermodynamic limit averaged free energy, $f_{sq}(\beta),$  from (\ref{eq:logpartfunsqrt}). Then
\begin{eqnarray}\label{eq:thm7eq1}
f_{sq}(\beta)= \bar{f}^{(r)}_{sq}(\beta)   =
\lim_{n\rightarrow\infty}   \psi_1(\calX,\calY,\bar{\p}(0),\bar{\q}(0),\bar{\m}(0),\beta,1,0),
 \end{eqnarray}
where  $\bar{f}^{(r)}_{sq}(\beta)$ is as in (\ref{eq:rrsbfinalfreeenergycompsqrt}) (or (\ref{eq:rrsbfinalfreeenergycomp1sqrt})) with the replica symmetry breaking parameters as in (\ref{eq:match2}) and $\psi_1(\calX,\calY,\bar{\p}(0),\bar{\q}(0),\bar{\m}(0),\beta,1,0)$ is as in the first part of (\ref{eq:thm6eq1}). In other words, the \emph{complete  sfl RDT} results of \cite{Stojnicnflgscompyx23,Stojnicsflgscompyx23} both \underline{exactly match} the predictions of the \emph{full replica symmetry breaking} under Parisi ansatz with the choice of replica parameters from (\ref{eq:match2}) and \underline{imply} $f_{sq}(\beta)= \bar{f}^{(r)}_{sq}(\beta)$.
\end{theorem}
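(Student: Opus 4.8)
The plan is to confirm the chain of identifications signposted in (\ref{eq:match1})--(\ref{eq:match7}), showing that the Parisi-ansatz replica free energy $\bar{f}_{sq}^{(r)}(\beta)$ and the complete sfl RDT functional $\psi_1$ evaluated at $t=0$ are two encodings of the \emph{same} quantity, and that both coincide with $f_{sq}(\beta)$ by virtue of the interpolation flatness established in Theorem \ref{thm:thm5} and Corollary \ref{thm:thm6}. The whole argument is therefore an organization of the matching computations rather than a fresh estimate: I would fix $s=1$, impose unit-norm $\cX,\cY$, and adopt the index-reversing dictionary (\ref{eq:match1a1}) between the replica parameters $(p_k,\bq_k,m_k)$ and the sfl RDT parameters $(\bar{\p}_{r+1-k},\bar{\q}_{r+1-k},\bar{\m}_{r+2-k})$.

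First I would record the $t=1$ boundary. Since all elements of $\cX,\cY$ have unit norm, at $t=1$ the kernel $D_{0,S}^{(i_1,i_2)}$ in (\ref{eq:thm6eq4}) collapses to $(\y^{(i_2)})^TG\x^{(i_1)}$, so with $s=1$ the object $Z_S$ becomes precisely the replicated partition-function summand and $\psi_S$ reduces to the thermodynamic free energy; this yields (\ref{eq:match1}). The substantive combinatorial step is then (\ref{eq:match3}): at $t=0$ the coupling $\sqrt{t}(\y^{(i_2)})^TG\x^{(i_1)}$ drops out and the surviving Gaussian fields $\u^{(2,k)}$ and $\h^{(k)}$ act separately on the $\y$- and $\x$-sums. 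I would show that the nested-expectation tower defining $\psi_S$ at $t=0$ factorizes into a product over $\cY$ driven by $b_k=\sqrt{\p_{k-1}-\p_k}$ and a product over $\cX$ driven by $c_k=\sqrt{\q_{k-1}-\q_k}$, and that, under (\ref{eq:match1a1}) together with the matching of $a_k,b_k,c_k$ to the $\alpha,\bar{\alpha}$ sequences, these two towers are exactly $\zeta_{r,P}$ and $\zeta_{r,Q}$ of (\ref{eq:rrsbp})--(\ref{eq:rrsbq}). In parallel, the purely energetic quadratic term (\ref{eq:match4}) must be matched to the $\tr(Q^{(r)}P^{(r)})$-contribution $I_{Q,P}^{(r)}$, using the index-reversal identity $\p_{k-1}\q_{k-1}-\p_k\q_k \leftrightarrow \bq_kp_k-\bq_{k-1}p_{k-1}$.

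The main obstacle I expect is verifying that the dictionary (\ref{eq:match1a1}) is self-consistent, i.e. that the replica extremization conditions (\ref{eq:match2}) are equivalent to the sfl RDT stationarity system (\ref{eq:saip4}). To settle this I would differentiate $\bar{f}_{sq}^{(r)}$ in $p_{k_1},\bq_{k_1},m_{k_1}$ and check, term by term, that the resulting equations reproduce the Gibbs-measure overlap averages $\langle\cdot\rangle_{\gamma_{k_1+1}^{(r)}}$ appearing in (\ref{eq:saip4}); this is the point at which the overlap structure of the two formalisms must literally coincide, and it is where careful bookkeeping of the reversed indices and of the $\m_k$-weights is essential.

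Granting this equivalence, the identity $\bar{f}_{sq}^{(r)}(\beta)=\lim_{n\rightarrow\infty}\psi_1(\cdots,t=0)$ follows by assembling (\ref{eq:match5}), the coincidence $\psi_S=\psi$ at $t=0$ from (\ref{eq:match6}), and the first line of (\ref{eq:thm6eq1}), exactly as written in (\ref{eq:match7}). Finally, Theorem \ref{thm:thm5} supplies $\frac{d\psi_1}{dt}=0$ along the stationary trajectory, so $\psi_1$ at $t=0$ equals $\psi_1$ at $t=1$, which by (\ref{eq:match1}) equals $f_{sq}(\beta)$; chaining these gives $f_{sq}(\beta)=\bar{f}_{sq}^{(r)}(\beta)=\lim_{n\rightarrow\infty}\psi_1(\calX,\calY,\bar{\p}(0),\bar{\q}(0),\bar{\m}(0),\beta,1,0)$, which is the assertion of the theorem.
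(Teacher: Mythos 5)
Your proposal is correct and follows essentially the same route as the paper's own argument: the same index dictionary (\ref{eq:match1a1}), the same boundary identifications (\ref{eq:match1}) and (\ref{eq:match3})--(\ref{eq:match6}), the same equivalence check between the replica stationarity conditions (\ref{eq:match2}) and the sfl RDT system (\ref{eq:saip4}), and the same interpolation-flatness conclusion from Theorem \ref{thm:thm5} and Corollary \ref{thm:thm6}. The only cosmetic difference is your final chaining through $\psi_1$ at $t=1$, which implicitly requires $\psi_1(\calX,\calY,\bar{\p}(1),\bar{\q}(1),\bar{\m}(1),\beta,1,1)=\psi_S(\calX,\calY,\bar{\p}(1),\bar{\q}(1),\bar{\m}(1),\beta,1,1)$ (the cancellation of the $u^{(4,k)}$-field contribution against the quadratic correction term), which is precisely the content of (\ref{eq:thm6eq5}) that the paper invokes directly instead.
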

\begin{proof}
  Follows from the above discussion.
\end{proof}

\section{Generality of the model}
\label{sec:examples}

The generality of the considered bilinear models is discussed to great extent in \cite{Stojnicnflgscompyx23,Stojnicsflgscompyx23}. We refer to \cite{Stojnicnflgscompyx23,Stojnicsflgscompyx23} for a more through overview, and here just briefly mention that the considered models encompass at once a very large number of well known problems that have been the subject of intensive studies over the last several decades in a host of different scientific fields. Some of the well known examples are the so-called Hopfield models (see, e.g., \cite{Hop82,PasFig78,Hebb49,PasShchTir94,ShchTir93,BarGenGueTan10,BarGenGueTan12,Tal98,StojnicMoreSophHopBnds10}), asymmetric Little models
(see, e.g.,  \cite{BruParRit92,Little74,BarGenGue11bip,CabMarPaoPar88,AmiGutSom85,StojnicAsymmLittBnds11}), various neural networks models
including  the \emph{spherical} perceptrons   (see, e.g., \cite{FPSUZ17,FraHwaUrb19,FraPar16,FraSclUrb19,FraSclUrb20,AlaSel20,StojnicGardGen13,StojnicGardSphErr13,StojnicGardSphNeg13,GarDer88,Gar88,Schlafli,Cover65,Winder,Winder61,Wendel62,Cameron60,Joseph60,BalVen87,Ven86,SchTir02,SchTir03}),  as well as the \emph{binary} perceptrons (see, e.g., \cite{StojnicGardGen13,GarDer88,Gar88,StojnicDiscPercp13,KraMez89,GutSte90,KimRoc98}) and many others.

The range of applications of the presented concepts seems rather unlimited. Studying such applications is usually problem specific and we will present many interesting results that can be obtained in these directions in separate papers.

\section{Conclusion}
\label{sec:lev2x3lev2liftconc}

A connection between the replica symmetry breaking (rsb) predictions and random processes comparisons in the bilinear Hamiltonian based models is studied. In  \cite{Stojnicnflgscompyx23} a very powerful \emph{fully lifted} statistical interpolating/comparison mechanism is introduced. It substantially expanded on an earlier partial lifted variant from \cite{Stojnicgscompyx16}. Here we show that the results of  \cite{Stojnicnflgscompyx23} and
its a stationarized realization introduced in \cite{Stojnicsflgscompyx23} are indeed extremely powerful. In particular, in a very generic context and for a rather wide class of bilinear Hamiltonian models, we study their average free energy and show that the machinery of \cite{Stojnicsflgscompyx23,Stojnicnflgscompyx23}  is strong enough  to produce results that \emph{exactly} match the Parisi ansatz based \emph{full rsb} predictions.

While we chose the free energy as the key object of interest, many other quantities associated with the considered models can be studied as well. As this is the introductory paper, where we introduce and present the core of the methodology, we defer a detailed discussion related to particular extensions to companion papers. We just mention briefly that besides being very powerful when handling problems considered here, the obtained results are also very generic as they can be applied to a plethora of well known scenarios appearing in various random structures and optimization problems. Problem specific adjustments that such applications often require, will be presented in separate papers.

\begin{singlespace}
\bibliographystyle{plain}
\bibliography{nflgscompyxRefs}
\end{singlespace}

\end{document}